\newcommand{\lnote}[1]{\footnote{{\bf \color{blue}Li-Yang}: {#1}}}
\newcommand{\rnote}[1]{\footnote{{\bf \color{red}Rocco}: {#1}}}
\newtheorem*{rep@theorem}{\rep@title}
\newcommand{\newreptheorem}[2]{
\newenvironment{rep#1}[1]{
 \def\rep@title{#2 \ref{##1}}
 \begin{rep@theorem}\itshape}
 {\end{rep@theorem}}}
\theoremstyle{plain}
\newcommand{\ignore}[1]{}
\def\colorful{1}
\newcommand{\gray}[1]{{\color{gray}{#1}}}
\newcommand{\gray}[1]{{{#1}}}
\newtheorem*{theorem*}{Theorem}
\newtheorem*{noclaim*}{Claim}
\newcommand{\uhr}{\upharpoonright}
\newcommand{\acz}{\mathsf{AC^0}}
\newcommand{\ds}{\displaystyle}
\newcommand{\stars}{\mathrm{stars}}
\newcommand{\gentle}{\mathrm{gentle}}
\newcommand{\sand}{\mathrm{sand}}
\newcommand{\SL}{\mathrm{SL}}
\newcommand{\PRG}{\mathrm{PRG}}
\newcommand{\mcount}{\mathrm{count}} 
\newcommand{\simple}{\mathrm{simple}}
\renewcommand{\F}{\mathds{F}}
\renewcommand{\N}{\mathds{N}} 
\begin{document}

\title{
Deterministic search for CNF satisfying assignments\\ in almost polynomial time
}
\ignore{\rnote{
Alternate title possibilities:
``Fast deterministic search for CNF satisfying assignments'' 
``Derandomized CNF search in almost polynomial time''
``Derandomized search for CNF satisfying assignments in almost polynomial time''
``Derandomized CNF satisfiability in almost polynomial time''}
}
\author{ Rocco A.~Servedio\thanks{Supported by NSF grants CCF-1420349 and CCF-1563155. Email: {\tt rocco@cs.columbia.edu}}\\ 
Columbia University \and Li-Yang Tan\thanks{Supported by NSF grant CCF-1563122.  Part of this research was done during a visit to Columbia University. Email: {\tt liyang@cs.columbia.edu}} \\ Toyota Technological Institute}

\begin{titlepage}

\maketitle

\begin{abstract}
We consider the fundamental derandomization problem of deterministically finding a satisfying assignment to a CNF formula that has many satisfying assignments.  We give a deterministic algorithm which, given an $n$-variable $\poly(n)$-clause CNF formula $F$ that has at least $\eps 2^n$ satisfying assignments, runs in time
\[
n^{\tilde{O}(\log\log n)^2}
\]
for $\eps \ge 1/\polylog(n)$ and outputs a satisfying assignment of $F$.  Prior to our work the fastest known algorithm for this problem was simply to enumerate over all seeds of a pseudorandom generator for CNFs; using the best known PRGs for CNFs \cite{DETT10}, this takes time $n^{\tilde{\Omega}(\log n)}$ even for constant $\eps$.
Our approach is based on a new general framework relating deterministic search and deterministic approximate counting, which we believe may find further applications.

\end{abstract}

\thispagestyle{empty}

\end{titlepage}

\section{Introduction}

Understanding the role of randomness in efficient computation has been a major focus of complexity theory over the past several decades.  In particular, much effort has been dedicated to developing general techniques for \emph{unconditional derandomization}, i.e.~methods of constructing efficient deterministic algorithms (that do not rely on any unproven hardness assumptions) for computational problems that are known to have efficient randomized algorithms.\ignore{\gray{Toward this end, pseudorandom generators and deterministic algorithms for approximately counting satisfying assignments have been obtained for many different types of Boolean functions.}\lnote{Slight overlap between this and the next sentence? {\bf Rocco:}  I had been thinking that this gray sentence gets us from ``computational problems'' in the previous sentence (which could include things like finding matchings, primality testing, anything really) to satisfying-assignment-type problems --- if someone like Cliff as opposed to a circuit-head were reading this, it could be a sort of bridge.  But I think it's okay to omit it too.}}  Notable successes have been achieved in this line of work:  pseudorandom generators with highly non-trivial seed length, and much-faster-than-brute-force deterministic approximate counting algorithms, are now known for many function classes such as those defined by logarithmic space, small-depth circuits, sparse and low-degree $\F_2$ polynomials, various classes of branching programs, functions of a few halfspaces, low-degree polynomial threshold functions, and more (see e.g.~\cite{AjtaiWigderson:85,Nis91,LVW93,NW94,LV96,SZ99,Trevisan:04,Bra10,RS:10,GOWZ10,DGJ+10:bifh,DKN10,GKMSVV11,GMRTV12,IMZ12,Kane12,MZ13,TX13,DS14stoc,BRRY14,HS16} and many other works).


While striking progress has thus been made, there remain fundamental gaps in our understanding of the overarching question in unconditional derandomization: can every randomized algorithm be made deterministic with only a polynomial slowdown?  In particular, while highly non-trivial results have been achieved for the classes mentioned above, a ``full derandomization''---i.e.~a deterministic algorithm running in \emph{polynomial time}, as opposed to, say, quasipolynomial time---remains elusive even for some of the simplest classes of functions.  (Even for the class of linear threshold functions, a full derandomization was only achieved in relatively recent work \cite{RS:10,GKMSVV11}.)

\vspace{-8pt}
\paragraph{The question we consider.}
Perhaps the most basic full derandomization problem that remains open is the \emph{CNF search problem}:  

\begin{quote}
\emph{Input:} An $n$-variable $M$-clause CNF formula $F$ that is promised to have many, say at least $\eps 2^n$, satisfying assignments.

\emph{Goal:} Output any satisfying assignment of $F$.

\end{quote}
Using randomness it is easy to find a satisfying assignment with high probability\ignore{ in time $\poly(n,M,1/\eps)$} simply by sampling $O(1/\eps)$ many assignments and evaluting $F$ on each one.  Is there a  polynomial-time \emph{deterministic} algorithm?  This problem was first considered by Ajtai and Wigderson in their pioneering work~\cite{AjtaiWigderson:85} on unconditional derandomization, in which they gave the first non-trivial (subexponential-time) deterministic algorithm for the problem.

\subsection{Prior results and related work} 

We briefly recall the prior state of the art for this and related problems.  
\vspace{-8pt}
\paragraph{Pseudorandom generators and hitting sets for CNFs.} Prior to our work the fastest known algorithm was simply to enumerate over all seeds of a pseudorandom generator $G$ that $\eps$-fools the class of $M$-clause $n$-variable CNF formulas; the definition of a pseudorandom generator immediately implies that some seed string $y$ will have $F(G(y))=1$.  Using the best known construction of $\eps$-PRGs for $M$-clause $n$-variable CNFs~\cite{DETT10}, this gives an algorithm running in time $\poly(n) \cdot (M/\eps)^{\tilde{O}(\log (M/ \eps))}$.  We observe that this PRG-based approach is oblivious to the input formula $F$, and can be used even if $F$ is only provided as a black-box oracle instead of an explicit CNF formula.  While this may be viewed as an advantage, it also suggests that non-oblivious approaches which exploit the structure of the input formula $F$ may be able to achieve faster runtimes.  We further observe that only an $\eps$-hitting set for CNFs rather than an $\eps$-PRG is required for this oblivious approach, but the best known explicit construction of hitting sets for general CNFs is simply the \cite{DETT10} PRG.  We recall that a seemingly-modest improvement of the~\cite{DETT10} PRG's seed length from $\tilde{O}(\log^2(M/\eps))$ to $O(\log^{1.99}(M/\eps))$, even for $\eps$-hitting sets, would improve state-of-the-art lower bounds against depth-three circuits, breaking a longstanding barrier in circuit complexity. (For the special case of \emph{read-once} CNF formulas, 
S{\'{\i}}ma and Z{\'{a}}k \cite{SimaZak10} have given an $\eps$-hitting set of $\poly(n)$ size for $\eps > 5/6$, and Gopalan et al.~\cite{GMRTV12} have given an $\eps$-PRG with seed length $\tilde{O}(\log(n/\eps))$.)

\vspace{-8pt} 
\paragraph{The work of Goldreich and Wigderson.} Recently, Goldreich and Widgderson~\cite{GW14} initiated the study of deterministic search in the regime where $\eps$ is extremely close to $1$, a relaxation of the standard regime where we typically think of $\eps = 1/2$ or $\eps = o(1)$. As one of their main results, they give a polynomial-time deterministic search algorithm for $\acz$ circuits when $\eps \ge 1-2^{n^{0.99}}/2^n$.  For the special case of $M$-clause $n$-variable CNF formulas (the subject of this work), they observe that if $\eps \geq 1-1/(4M)$ then any $\delta=1/(4M)$-biased sample space over $\{0,1\}^n$ must contain a satisfying assignment of $F$.  Since well-known deterministic algorithms \cite{NN93,AGHP92} can enumerate all $\poly(n/\delta)$ elements of such a sample space in $\poly(n/\delta)$ time, this gives a $\poly(n,M)$ time algorithm in this special case. (As they note in their paper, this observation is already implicit in the work of \cite{GMRTV12}.) 

\vspace{-8pt}
\paragraph{Deterministic approximate counting and answering Trevisan's question.} 
 While the PRG-based approach described above is the most efficient algorithm known for deterministic CNF search, a more efficient algorithm is known for deterministic \emph{approximate counting} of CNF satisfying assignments.  Building on early work of Luby and Veli{\v{c}}kovi{\'c}~\cite{LV96}, Gopalan, Meka, and Reingold~\cite{GMR13} gave a deterministic algorithm which, given as input an $M$-clause $n$-variable CNF $F$ and a parameter $\eps > 0$, runs in time $(Mn/\eps)^{\tilde{O}(\log \log n + \log \log M + \log(1/\eps))}$ and outputs an (additive) $\eps$-accurate estimate of the fraction of assignments that satisfy $F$.

 Trevisan \cite{Trevisan:slides10} has remarked that it is curious that this deterministic approximate counting algorithm---which in particular yields a certificate that $F$ has at least $\Omega(\eps 2^n)$ satisfying assignments---does not yield a comparably efficient algorithm to \emph{find} a satisfying assignment.  In \cite{Trevisan:slides10} he posed the problem of developing a deterministic search algorithm running in time comparable to that of deterministic approximate counting algorithms.  Our work gives a positive solution to this problem (though it should be noted that our search algorithm's exponent is roughly quadratic in the exponent of the \cite{GMR13} counting algorithm).

\subsection{Our main result and approach} 
\label{sec:our-approach} 
We give a deterministic CNF search algorithm that runs in almost polynomial time:

\begin{theorem} \label{thm:main}
There is a deterministic algorithm which, when given as input an $M$-clause CNF formula $F$ over $\{0,1\}^n$ that has $|F^{-1}(1)| \geq \eps 2^n$, runs in time
\[
\left({\frac {Mn} \eps}\right)^{\tilde{O}(\log \log (Mn) + \log (1/\eps))^2}
\]
and outputs a satisfying assignment of $F$.\ignore{\rnote{I believe this is what we prove, right?  If we prefer it aesthetically, we can instead state our main theorem as

``There is a deterministic algorithm which, when given as input an $M$-clause CNF formula $F$ over $\{0,1\}^n$ 
that has $|F^{-1}(1)| \geq \eps 2^n$, runs in time
\[
\left({\frac {Mn} \eps}\right)^{\tilde{O}(\log \log (Mn) + \log \log(1/\eps))^2}
\]
and outputs a satisfying assignment of $F$.''  (Note this never gives a weaker bound than Theorem \ref{thm:main}.)

Justification:  If $M \geq n^{\Omega(1)}$ then this follows immediately from Theorem \ref{thm:main}.  If $M$ is smaller, then we can pad $M$
with $n$ clauses $(x_1 \vee \overline{x}_1) \wedge \cdots \wedge (x_n \vee \overline{x}_n)$, thus achieving $M' = M+n = n^{\Omega(1)}$, and apply Theorem \ref{thm:main} to $M'.$  

(We could even state the running time as
\[
\left({\frac {M+n} \eps}\right)^{\tilde{O}(\log \log (M+n) + \log \log(1/\eps))^2}
\]
or as
\[
\left({\frac {\max\{M,n\}} \eps}\right)^{\tilde{O}(\log \log (\max\{M,n\}) + \log \log(1/\eps))^2}
\]

but those both look a bit silly.)

I'm happy with any way to present this that you prefer.
}}
\end{theorem}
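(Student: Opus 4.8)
The plan is to bootstrap the deterministic approximate counting algorithm of \cite{GMR13} into a search algorithm via a general ``restriction-and-count'' recursion; abstractly this yields a reduction from deterministic search to deterministic approximate counting for any class (here, CNFs) that simplifies under restrictions. The naive reduction is the standard self-reduction: approximately count $|F^{-1}(1)|$, fix $x_1$ to whichever value keeps the estimated count larger, and recurse. In the exact world the surviving density never drops below $\eps$; but an approximate counter with additive error $\gamma$ can lose $\gamma$ density at each of the $n$ fixing steps, forcing $\gamma \approx \eps/n$, and the \cite{GMR13} counter run at accuracy $\eps/n$ takes only quasipolynomial time $(Mn/\eps)^{\tilde{O}(\log n)}$. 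The point of the framework is to replace the $n$ single-variable fixing steps by only $d := \tilde{O}(\log\log(Mn)+\log(1/\eps))$ \emph{restriction} rounds, each of which structurally simplifies $F$ while being \emph{gentle} --- it preserves all but a $(1-o(1))$ fraction (in the worst case, a constant fraction) of the density --- so that the counter is only ever invoked at accuracy $\poly(\eps)$, where its exponent is itself $\tilde{O}(\log\log(Mn)+\log(1/\eps))$.

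The engine of each round is a \emph{gentle pseudorandom restriction lemma}, proved by a switching-lemma/encoding argument in the style of Razborov (the auxiliary notions of canonical and ``projected'' decision trees enter here). Informally: if $F$ currently has width $w$ and density $\delta$, a random restriction leaving roughly a $1/\sqrt{w}$ fraction of coordinates free (a) drives the width down to $\tilde{O}(\sqrt{w})$ --- and once $w$ is small, turns $F$ into a decision tree of depth $O(\log(Mn/\eps))$ --- while (b) it preserves density in expectation, so since density lies in $[0,1]$ a Markov-type bound shows at least an $\Omega(\delta)$ fraction of restrictions keep density $\ge \delta/2$; moreover the same survives for restrictions drawn from an \emph{explicit pseudorandom family} of size $(Mn/\eps)^{\tilde{O}(\log\log(Mn)+\log(1/\eps))}$. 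Given such a family, a good restriction $\rho$ is found deterministically: enumerate the family, and for each $\rho$ run the \cite{GMR13} counter on $F|_\rho$ at accuracy $\Theta(\delta)$ and check the surviving width; keep the first $\rho$ passing both tests.

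Iterating, after $\tilde{O}(\log\log(Mn))$ width-halving rounds the width is $O(\log(Mn/\eps))$, and one final switching round turns $F|_\rho$ into an explicit decision tree of depth $O(\log(Mn/\eps))$; reading a partial assignment off any $1$-leaf takes $(Mn/\eps)^{O(1)}$ time and, composed with the restrictions applied along the way, yields a satisfying assignment of $F$. Since each restriction is selected only to preserve density and reduce width --- it is the later rounds that must actually produce the assignment, and the density guarantee degrades and must be re-verified level by level --- the recursion branches over the pseudorandom family at every one of its $d$ levels rather than committing to a single $\rho$, so the per-level cost $(Mn/\eps)^{\tilde{O}(d)}$ (family size times counter time) multiplies across levels to give total running time $(Mn/\eps)^{\tilde{O}(d)\cdot d} = (Mn/\eps)^{\tilde{O}(\log\log(Mn)+\log(1/\eps))^2}$, i.e.\ roughly the square of the counter's exponent.

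The main obstacle is the gentle pseudorandom restriction lemma itself: we must exhibit a single distribution over restrictions, supported on only $(Mn/\eps)^{\tilde{O}(\log\log(Mn)+\log(1/\eps))}$ points, that simultaneously (a) provably shrinks clause width --- a standard switching phenomenon for truly random restrictions that must be re-established for pseudorandom ones through the encoding argument --- and (b) preserves the density of \emph{the given} formula up to a $1-o(1)$ factor with noticeable probability, \emph{and} such that (b) can be certified after the fact using nothing but the approximate counter at an affordable accuracy. Making the structural and density guarantees compose cleanly across all $d$ levels, and handling the first round where $w$ may be as large as $n$ (so that very wide clauses must first be tamed), are the remaining technical hurdles.
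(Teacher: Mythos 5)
Your proposal takes a genuinely different route from the paper, but there are two substantive gaps that make it not work as written.

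First, and most importantly, you have the direction of the pseudorandom restriction reversed, and this sinks the density-preservation argument. You propose to leave a small ($\approx 1/\sqrt{w}$) fraction of coordinates free, fill the large complementary set with a \emph{pseudorandom} assignment, and argue via a Markov-type bound that a noticeable fraction of restrictions in an explicit family keep density above $\delta/2$. The Markov argument is fine when the fill-ins of the large fixed set are \emph{uniform}, since then $\Ex_{\brho}[\mathrm{density}(F\uhr\brho)] = \Ex[F]$ exactly; but that gives a family of size $2^{n-|L|}$ which cannot be enumerated. As soon as you replace the fill-ins by a small pseudorandom distribution $\calD$, you need $\calD$ to fool the real-valued function $z \mapsto \Ex_{\bx_L}[F(z,\bx_L)]$, i.e.\ every subfunction of a width-$w$ CNF on $n-|L|$ variables. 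That is just fooling general width-$w$ CNFs, and the best known seed length for that is $\tilde{O}(\log^2(M/\delta))$---this is exactly the quasipolynomial bottleneck the whole paper is designed to circumvent, and it gets worse still in your proposed first round where $w$ can be as large as $n$. The sentence ``moreover the same survives for restrictions drawn from an explicit pseudorandom family'' is precisely the hard part and is left unargued. The paper's resolution (Lemma~\ref{lem:TX-switcheroo}, the Ajtai--Wigderson trick) works only because it \emph{flips the roles}: the switching lemma's free coordinates $L$ are exactly the ones the algorithm's restriction $\tilde{\pi}$ \emph{fixes}, using a PRG for the \emph{narrow} class $\calC_\simple$, while the large complement $[n]\setminus L$ stays free and is handled by another recursion stage. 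This is why the paper needs $T = (2\ln n)/p$ stages with $p$ quite small (each stage only fixes a $\approx p$ fraction of coordinates), rather than your $\tilde{O}(\log\log(Mn))$ width-halving rounds each fixing a $1-1/\sqrt{w}$ fraction.

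Second, your runtime accounting is based on a branching recursion: you say ``the recursion branches over the pseudorandom family at every one of its $d$ levels ... so the per-level cost multiplies across levels to give total running time $(Mn/\eps)^{\tilde{O}(d)\cdot d}$.'' The paper does \emph{not} branch; at each stage it runs the approximate counter on every candidate restriction and commits to the single one with the largest estimated bias (Corollary~\ref{cor:single-step}), so the total running time is (number of stages) $\times$ (per-stage cost), not (per-stage cost)$^{\text{stages}}$. Branching over a family of size $(Mn/\eps)^{\tilde{O}(\cdot)}$ for $d$ levels would give a much worse bound. In the paper the quadratic exponent comes from \emph{within} a single stage---the seed length $r_\PRG$ for narrow-width-$w'$ CNFs scales like $(w')^2 \cdot \polylog$ with $w' = O(\log((\log M)/\eps))$, and the counter must be run at accuracy $\delta_\mcount \approx \eps p / \log n$ where $1/p = (\log(M/\eps))^{O(\log\log(M/\eps))}$---not from multiplying across levels.
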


For the case when $M=\poly(n)$ and $\eps \geq 1/\polylog(n)$, the running time of our algorithm is $n^{\tilde{O}(\log \log n)^2}$. As discussed above, the previous fastest algorithm takes time $n^{\tilde{\Omega}(\log n)}$ when $M = \poly(n)$, even for constant $\eps$. 

Our approach is based on a new general framework for obtaining deterministic search algorithms from deterministic approximate counting algorithms (given a few additional ingredients).  Roughly speaking, this approach is an extension of the generic naive reduction described in the next subsection; while the naive reduction constructs a satisfying assignment one coordinate at a time, our approach assigns a whole block of coordinates at each iteration as described in Section~\ref{sec:our-approach} below.  We are optimistic that this framework may find further applications for other deterministic search problems.

\subsubsection{Warm up: a simple and naive search algorithm based on approximate counting} To motivate our approach, we begin by considering a very simple and naive way of obtaining a deterministic search algorithm from a deterministic approximate counting algorithm. (We will specialize our discussion to the class of CNF formulas, but the generic reduction we describe here relates these two derandomization tasks for all function classes.)  Suppose we have a deterministic approximate counting algorithm $A_\mathrm{count}$ for the class of CNF formulas: given as input an $M$-clause CNF formula $F$ over $\zo^n$, this algorithm $A_\mathrm{count}$ runs in time $T(n,M,\delta)$ and outputs an (additive) $\delta$-accurate estimate of $\Pr[F(\bx)=1]$. Then this immediately yields, in a black-box manner, a deterministic search algorithm $A_\mathrm{search}$ with the following performance guarantee: given as input an $M$-clause CNF formula $F$ over $\zo^n$ that has $|F^{-1}(1)|\ge \eps 2^n$, the algorithm $A_\mathrm{search}$ runs in time 
\begin{equation} T(n,M,\eps/(4n)) \cdot 2n  \label{eq:naive-runtime} 
\end{equation} 
and outputs a satisfying assignment of $F$. The argument follows the standard $n$-stage decision-to-search reduction; in the $(i+1)$-st stage, after the first $i$ bits $(z_1,\dots,z_i) \in \zo^{{i}}$ have been obtained, the algorithm runs $A_\mathrm{count}$ with accuracy parameter $\delta := \eps/(4n)$ both on $F(z_1,\dots,z_i,0,x_{i +1},\dots,x_n)$ and on $F(z_1,\dots,z_i,1,x_{i+1},\dots,x_n)$, and takes as the next coordinate $z_{i+1}$ the bit corresponding to the higher output value from $A_\mathrm{count}.$  A straightforward induction shows that for all $i \in [n]$ we have
\[ 
\E\big[F(z_1,\dots,z_i,\bx_{i+1},\dots,\bx_n)\big] \ge
\E\big[F(\bx_1,\dots,\bx_n)\big] - 2 i \cdot \delta, 
\]
so the final string $(z_1,\dots,z_n)$ satisfies
$\E[F(z_1,\dots,z_n)]\ \geq \eps - {\frac \eps 2} > 0$ and hence $F(z_1,\dots,z_n)=1.$

However, instantiating this approach with the best known deterministic approximate counting algorithm due to Gopalan, Meka, and Reingold~\cite{GMR13}, which runs in time 
\[ T(m,M,\delta) = (Mn/\delta)^{\tilde{O}(\log \log n + \log \log M + \log(1/\delta))},\]
we see that the running time (\ref{eq:naive-runtime}) evaluates to  
\[ (Mn/\eps)^{\tilde{O}(\log(n/\eps) + \log \log M)}. \] 
This is $n^{\tilde{\Omega}(\log n)}$ when $M = \poly(n)$ (even for constant $\eps$), which is no improvement over the trivial PRG-based algorithm.  The crux of the problem with this naive approach is that we cannot afford to run the~\cite{GMR13} approximate counting algorithm to such high accuracy, $\delta = O(\eps/n)$. 

\subsubsection{Our approach:  a more efficient reduction}
\label{sec:our-approach} 

At the highest level, our search algorithm shares the same overall structure as the naive bit-by-bit approach sketched above. Our algorithm is recursive in nature and uncovers a satisfying assignment of $F$ in a stage-wise manner: in each stage we run a deterministic approximate counting algorithm on subfunctions of $F$, and we recurse on the one for which our estimate of its fraction of satisfying assignments is the largest.   However, instead of uncovering a single coordinate of a satisfying assignment per stage, our algorithm uncovers a \emph{$p$ fraction of the remaining coordinates} per stage where $p \gg 1/n$.  (In our analysis $p = \exp(-\Theta(\log\log(Mn/\eps))^2)$, though its precise value is unimportant for the rest of this high-level discussion.)  Roughly speaking, this allows us to circumvent the problem highlighted above since there will be at most $p^{-1}\ln n$ many stages in total (rather than $n$), and so in each stage we can run the~\cite{GMR13} approximate counting algorithm with a much larger error parameter $\delta = \Omega(\eps/(p^{-1}\ln n))$ instead of $\delta = O(\eps/n)$.  

\vspace{-8pt} 
\paragraph{Three main ingredients of our approach.}  We will describe our approach in general terms, since the overall framework is fairly versatile and could be instantiated in other contexts.  
\begin{itemize} 
\item Let $\calC$ be the function class of interest, the class for which we would like to design a deterministic algorithm for the ``$\calC$ search problem": given as input an $n$-variable function $F \in\calC$ that is promised to have at least $\eps2^n$ satisfying assignments, find a satisfying assignment. (Our analysis will assume that $\calC$ is closed under restrictions, which holds for natural function classes including the class of $M$-clause CNF formulas.)
\item Let $\calC_{\text{simple}} \sse \calC$ be a subclass of ``simple" functions within $\calC$.  \end{itemize} 

As alluded to above, the plan is to do search for $\calC$ recursively in stages, uncovering a satisfying assignment of $F \in \calC$ ``chunk-by-chunk". In each stage we employ three pseudorandom constructs, the first two of which are: 
\begin{enumerate} 
\item  A PRG for $\calC_{\text{simple}}$, and 
\item  A deterministic approximate counting algorithm $A_{\text{count}}$ for $\calC$.  
\end{enumerate} 
The win of our approach over the trivial PRG-based search algorithm will rely on both (1) the simplicity of the functions in $\calC_\simple$ enabling PRGs of significantly shorter seed length than those known for $\calC$, and in similar spirit, (2) the existence of an approximate counting algorithm for $\calC$ with runtime significantly better than that of the trivial PRG-based algorithm for $\calC$. 

The third and final ingredient is a ``pseudorandom $\calC$-to-$\calC_\simple$ simplification lemma": 
\begin{enumerate}
\item[3.]  Pseudorandom $\calC$-to-$\calC_\simple$ simplification lemma. 

Roughly speaking, such a simplification lemma says the following: there is a pseudorandom distribution $\calR$ over restrictions such that for all $F \in \calC$, with high probability over $\brho\leftarrow \calR$ the randomly restricted function $F \uhr \brho$ belongs to $\calC_\simple$.  In more detail, this pseudorandom distribution $\calR$ over the space of restrictions $\{0,1,\ast\}^n$ should have the following structure:  
\begin{enumerate}
\item The set of ``live" positions $\bL \sse [n]$ (i.e.~the set of $\ast$'s) can be sampled {efficiently} with seed length $r_\SL$.  We write $\bL \leftarrow \calR_\stars$ to denote a draw from this pseudorandom distribution over subsets of $[n]$. 
\item Non-live positions $[n]\setminus \bL$ are filled in independently and uniformly with $\{0,1\}$, and do not count against the seed length $r_\SL$. We write $\brho\leftarrow \zo^{[n]\setminus \bL}$ to denote a draw of such a restriction.
\end{enumerate} 

We will require each subset $L \in \supp(\calR_\stars)$ to have size at least $pn$ for some not-too-small $p \in (0,1)$ (equivalently, we will require $\calR$ to be supported on restrictions that leave at least a $p$ fraction of coordinates unfixed). As we will soon see, this is ``the same $p$'' as the $p$ in the high-level description of our approach in the first paragraph of this subsection; the size of $L$ corresponds exactly to the number of coordinates of a satisfying assignment that we uncover per stage. 

The guarantee that we will require of this pseudorandom $\calC$-to-$\calC_\simple$ simplification lemma is roughly as follows: for every $F \in \calC$, 
\begin{equation} \Ex_{\bL\leftarrow\calR_{\stars}}\bigg[ \Prx_{\brho\leftarrow\zo^{[n]\setminus \bL}}\big[\, (F \uhr \brho) \notin \calC_\simple\big] \bigg]  \le \delta_\SL, \label{eq:PSL}
\end{equation} 
where the failure probability $\delta_\SL$ is as small as possible.  In fact, our approach does not actually require that $F \uhr \brho$ belong to $\calC_\simple$; it suffices for $F \uhr \brho$ to be well-approximated by some $F' \in \calC_\simple$ for a suitable notion of approximation ($F \uhr \brho$ has a ``$\delta$-lower-approximator" in $\calC_\simple$).  The analysis of our CNF search algorithm will crucially exploit this relaxation of (\ref{eq:PSL}), but for clarity of exposition we will assume the stronger guarantee of (\ref{eq:PSL}) for the description of our general framework.  
\end{enumerate} 

For $\calC$ being the class of CNF formulas, we remark that ``pseudorandom $\calC$-to-$\calC_\simple$ simplification lemmas" have been the subject of much research~\cite{AjtaiWigderson:85,AAIPR01,IMP12,GMR13,TX13,GW14}. These simplification lemmas, more commonly referred to as \emph{pseudorandom switching lemmas} in this context, are achieved for various notions of ``simplicity", with $\calC_\simple$ being juntas~\cite{AjtaiWigderson:85,AAIPR01,IMP12,GW14}, decision trees~\cite{TX13}, or small-width CNF formulas~\cite{GMR13}.  We remark that for all these notions of ``simple" CNF formulas, there are indeed PRGs with significantly shorter seed length than the best known PRG for general CNF formulas~\cite{DETT10}. (In our analysis $\calC_\simple$ will be the class of $(\log((\log Mn)/\eps))$-width CNF formulas, as this leads to the best overall parameters in our final result.)

Going back to the general framework, we now explain how these three pseudorandom constructs---(1) PRG for $\calC_\simple$, (2) deterministic approximate counting algorithm $A_\mcount$ for $\calC$, and (3) pseudorandom $\calC$-to-$\calC_\simple$ simplification lemma---fit together to give a deterministic search algorithm for $\calC$. 

\vspace{-8pt}
\paragraph{A simple but crucial fact from~\cite{AjtaiWigderson:85}.}  At the heart of our analysis is an elementary fact about pseudorandom simplification lemmas. This fact was first stated and utilized in the influential work of Ajtai and Wigderson~\cite{AjtaiWigderson:85} giving the first non-trivial PRG for $\acz$ circuits; variants of it also play a role in the more recent PRG constructions of~\cite{GMRTV12,IMZ12,RSV13,TX13}.  

 Suppose that we have a pseudorandom $\calC$-to-$\calC_\simple$ simplification lemma satisfying (\ref{eq:PSL}). Fix an $L \in \supp(\calR_\stars)$ such that the inner probability of (\ref{eq:PSL}) is at most $\delta_\SL$.  Let $\calD$ be a distribution that $\delta_\PRG$-fools $\calC_\simple$, and suppose $\calD$ can be sampled with $r_\PRG$ many random bits. The simple but crucial fact from~\cite{AjtaiWigderson:85} is the following: the distribution over $\zo^n$ where 
\begin{enumerate}
\item The coordinates in $[n] \setminus L$ are filled in with uniform random bits;
\item The coordinates in $L$ are filled in according to the pseudorandom distribution $\calD$,
\end{enumerate}
$(\delta_\SL + \delta_\PRG)$-fools $\calC$. That is, for all $F \in \calC$, 
\[ \mathop{\Ex_{\bx\leftarrow\calU}}_{\by\leftarrow\calD}\big[ F(\bx_{[n]\setminus L}, \by_L)\big] = \Ex_{\bx\leftarrow\calU}\big[F(\bx)\big] \pm (\delta_\SL + \delta_\PRG). \] 
Given this observation of~\cite{AjtaiWigderson:85}, it follows that there must exist at least one $y \in \supp(\calD)$ such that 
\[ \Ex_{\bx\leftarrow\calU}\big[ F(\bx_{[n]\setminus L}, y_L)\big] \ge \Ex_{\bx\leftarrow\calU}\big[F(\bx)\big] - (\delta_\SL + \delta_\PRG). \] 
Equivalently, the restriction $\pi^*$ that fixes the coordinates in $L$ according to $y$ preserves (from below) $F$'s fraction of satisfying assignments up to an error of $(\delta_\SL + \delta_\PRG)$, by which we mean: 
\begin{equation} \Ex_{\bx\leftarrow\calU}\big[ (F \uhr \pi^*)(\bx)\big] \ge \Ex_{\bx\leftarrow\calU}\big[F(\bx)\big] - (\delta_\SL + \delta_\PRG). \label{eq:good-pi} 
\end{equation} 
Note that the number of coordinates that $\pi^*$ fixes is precisely the size of $L$, which explains why, as alluded to above, we  require the pseudorandom simplification lemma to be such that every $L \in \supp(\calR_\stars)$ has size at least $pn$ for some not-too-small $p\in (0,1)$. 

\vspace{-8pt}
\paragraph{Our search algorithm and its analysis.}  Our goal in a single stage of the recursive algorithm is to find a restriction that (approximately) satisfies (\ref{eq:good-pi}): such a restriction reduces our search space $\zo^n$ by $|{\pi^*}^{-1}(\{0,1\})| = |L| \ge pn$ many dimensions, while ensuring that the restricted function $F \uhr \pi^*$ still has ``many" satisfying assignments.

To accomplish this, our search algorithm cycles through all $2^{r_\SL + r_\PRG}$ candidates $\pi$---that is, all possible restrictions fixing $L$ according to $y$ where $L \in \supp(\calR_\stars)$ and $y \in \supp(\calD)$---and  for each candidate $\pi$, it runs the deterministic approximate counting algorithm $A_\mcount$ to estimate $\E[(F\uhr \pi)(\bx)]$ to accuracy $\delta_\mcount$.  It is straightforward to see that the restriction $\tilde{\pi}$ for which $A_\mcount$'s estimate is the largest will satisfy 
\[ \Ex_{\bx\leftarrow\calU}\big[ (F \uhr \tilde{\pi})(\bx)\big] \ge \Ex_{\bx\leftarrow\calU}\big[F(\bx)\big] - (\delta_\SL + \delta_\PRG) - 2\delta_\mcount. \]  
Up to an additive factor of $2\delta_\mcount$, this restriction $\tilde{\pi}$ is ``as good as" the restriction $\pi^{*}$ from (\ref{eq:good-pi}). 
Our algorithm recurses on $F \uhr \tilde{\pi}$, a function over $\zo^{\tilde{\pi}^{-1}(\ast)}$ where $|\tilde{\pi}^{-1}(\ast)| \le (1-p)n$. 
The runtime of this single stage of our recursive algorithm is at most 
\[ 2^{r_\SL + r_\PRG} \cdot T(n,\delta_\mcount), \]
where $T(n,\delta)$ denotes the running time of the deterministic approximate counting algorithm $A_\mcount$, when given as input an $n$-variable function $F \in \calC$ and accuracy parameter $\delta$. 

By fixing at least a $p$ fraction of the remaining coordinates in each stage, we ensure that there are at most $p^{-1}\ln n$ many stages in total, after which all $n$ coordinates will have been fixed to a certain assignment $x \in \zo^n$ and the algorithm terminates with $x$ as its output.  Hence, by choosing parameters so that 
\[ \delta_\SL + \delta_\PRG + 2\delta_\mcount \le \frac1{2} \cdot \frac{\eps}{p^{-1}\ln n},  \] 
we ensure that the algorithm always recurses on a subfunction that is satisfied by at least an $(\eps/2)$-fraction of its assignments. In particular, this guarantees that the $n$-bit assignment $x \in \zo^n$ which the algorithm outputs is indeed a satisfying assignment of $F$.  The overall runtime of the entire algorithm is 
\[ 2^{r_\SL + r_\PRG} \cdot T(n,\delta_\mcount) \cdot p^{-1}\ln n. \]

\subsection{Organization of this paper} 

In the rest of this paper we instantiate the general framework described above with $\calC$ being the class of $M$-clause $n$-variable CNF formulas, thus establishing our main result (Theorem~\ref{thm:main}).

In Section~\ref{sec:prelim} we recall the relevant definitions and state a few simplifying assumptions.  In Section~\ref{sec:PSL} we state the pseudorandom $\calC$-to-$\calC_\simple$ simplification lemma that we will use in our context (the pseudorandom switching lemma of~\cite{GMR13}, with $\calC_\simple$ being the class of small-width CNF formulas) and establish some of its basic properties.  In Sections~\ref{sec:AW} and~\ref{sec:existence} we use an extension of the~\cite{AjtaiWigderson:85} fact, together with this pseudorandom switching lemma and a PRG for $\calC_\simple$, to construct a small set of restrictions that is guaranteed to contain a ``good" restriction $\pi^{*}$, one that fixes a significant fraction of coordinates while preserving the bias of a CNF formula from below.   In Section~\ref{sec:find} we show how to use a deterministic approximate counting algorithm to search through this set and find a restriction $\tilde{\pi}$ that is ``almost as good as" $\pi^*$, thus completing the description of one stage of our recursive search algorithm.  Finally, in Section~\ref{sec:done} we put the pieces together and give our overall recursive search algorithm.

\section{Background and setup} \label{sec:prelim}

For $r < n$, we say that a distribution $\calD$ over $\zo^n$ can be \emph{sampled efficiently with $r$ random bits} if (i) $\calD$ is the uniform distribution over a multiset of size exactly $2^r$ of strings from $\zo^n$, and (ii) there is a deterministic algorithm $\mathrm{Gen}_{\calD}$ which, given as input a uniform random $r$-bit string $\bx \leftarrow \zo^r$, runs in time $\poly(n)$ and outputs a string drawn from $\calD$.

For $\delta>0$ and a class $\calC$ of functions from $\zo^n$ to $\zo$, we say that a distribution $\calD$ over $\zo^n$ \emph{$\delta$-fools $\calC$ with seed length $r$} if (a) $\calD$ can be sampled efficiently with $r$ random bits via algorithm $\mathrm{Gen}_\calD$, and (b) for every function $f \in \calC$, we have
\[
\bigg|\Ex_{\bs \leftarrow \{0,1\}^r}[f(\mathrm{Gen}_{\calD}(\bs))] - 
\Ex_{\bx \leftarrow \{0,1\}^n}[f(\bx)]\bigg| \leq \delta.
\]
Equivalently, we say that $\mathrm{Gen}_\calD$ is a \emph{$\delta$-PRG for $\calC$ with seed length $r$.}

Given a function $f: \zo^n \to \zo$ and a class of functions $\calC$ from $\zo^n$ to $\zo$, we say that \emph{$f$ is $\delta$-sandwiched by $\calC$} if there exist functions $g_\ell, g_u \in \calC$ such that (i) $g_\ell(x) \leq f(x) \leq g_u(x)$ for all $x \in \zo^n$, and (ii) $\E_{\bx \leftarrow \zo^n}[g_u(\bx) - g_\ell(\bx)] \leq \delta.$  The function $g_\ell$ ($g_u$, respectively) is said to be a \emph{lower $\delta$-approximator} (\emph{upper $\delta$-approximator}, respectively) for $f$.

\paragraph{Some simplifying assumptions.} We first observe that we may assume without loss of generality that our algorithm is given the value of $\eps$.  This is because the algorithm can try values $\eps={\frac 1 2}, {\frac 1 4}, {\frac 1 8}, \cdots$, halting when it finds a satisfying assignment, without changing the claimed asymptotic running time. We next observe that we may assume without loss of generality that the input CNF formula has $M \geq n$ many clauses.
This is because  if $M<n$  then we can pad $F$ with $n-M$ clauses $(x_1 \vee \overline{x}_1), \cdots, (x_{n-M} \vee \overline{x}_{n-M})$ to obtain an equivalent formula $F'$ with $n$ clauses and run the algorithm on $F'$.

The following simple observation allows us to assume without loss of generality that the $M$-clause input CNF formula has width bounded by $O(\log(M/\eps))$: 

\begin{observation}[Trimming $F$]
\label{obs:trim} 
\emph{
Let $F$ be an $M$-clause CNF over $\{0,1\}^n$, and let $F'$ be the CNF obtained from $F$ by trimming each clause of width $w' > w := \log(2M/\eps)$ to width exactly $w$ (by removing an arbitrary $w'-w$ literals from the clause). Then
\begin{enumerate}
\item $F'^{-1}(1) \sse F^{-1}(1)$. 
\item $\Ex[F'(\bx)] \ge \Ex[F(\bx)] - \eps/2$. 
\end{enumerate} 
We observe that $F'$ can be constructed deterministically from $F$ in time $\poly(M,n)$. }
\end{observation}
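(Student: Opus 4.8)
The plan is to derive both parts from the single pointwise observation that trimming literals from a clause can only shrink its set of satisfying assignments, so that $F'(x) \le F(x)$ for every $x \in \zo^n$. Indeed, a clause is a disjunction of literals, so deleting literals from it can turn a satisfied clause into a falsified one but never the reverse. Hence if $F'(x)=1$ then every clause of $F'$ is satisfied by $x$, and since each clause of $F$ contains (as a sub-disjunction) the corresponding clause of $F'$, every clause of $F$ is satisfied by $x$ as well, i.e.\ $F(x)=1$. This establishes part~1, $F'^{-1}(1) \sse F^{-1}(1)$.

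For part~2, I would use $F' \le F$ pointwise to write
\[
\Ex_{\bx}[F(\bx)] - \Ex_{\bx}[F'(\bx)] = \Prx_{\bx}\big[\,F(\bx)=1 \text{ and } F'(\bx)=0\,\big],
\]
and then bound the right-hand side by a union bound over the trimmed clauses. The key step is the event decomposition: if $F(x)=1$ but $F'(x)=0$, then some clause $C'$ of $F'$ is falsified by $x$, and $C'$ must be a \emph{trimmed} clause (an unchanged clause of $F'$ is identical to the corresponding clause of $F$, which is satisfied since $F(x)=1$). Each trimmed clause $C'$ has width exactly $w$, hence is falsified with probability at most $2^{-w}$ over a uniform $\bx$; since there are at most $M$ trimmed clauses, a union bound gives
\[
\Prx_{\bx}\big[\,F(\bx)=1 \text{ and } F'(\bx)=0\,\big] \;\le\; M \cdot 2^{-w} \;=\; M \cdot \frac{\eps}{2M} \;=\; \frac{\eps}{2},
\]
using $w = \log(2M/\eps)$. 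This is exactly the claimed inequality $\Ex[F'(\bx)] \ge \Ex[F(\bx)] - \eps/2$.

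Finally, the algorithmic claim is immediate: given the explicit list of clauses of $F$, a single pass that, for each clause of width $w' > w$, deletes an arbitrary fixed subset of $w'-w$ of its literals produces $F'$ in $\poly(M,n)$ time. There is no real obstacle in this proof; the only point meriting care is making the union-bound decomposition in part~2 precise, i.e.\ observing that the gap event $\{F(\bx)=1,\ F'(\bx)=0\}$ can only be witnessed by one of the at most $M$ trimmed clauses, each of which (having width exactly $w$) is falsified with probability at most $2^{-w}$ over the uniform distribution.
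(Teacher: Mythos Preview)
Your proof is correct and takes essentially the same approach as the paper's: part~1 follows immediately from the pointwise inequality $F' \le F$, and part~2 is a union bound over the at most $M$ trimmed clauses, each of which is falsified with probability $2^{-w} = \eps/(2M)$. The paper phrases part~2 iteratively (``each time a clause is replaced $\dots$ the number of satisfying assignments is reduced by at most $2^{-w}\cdot 2^n$''), but this telescoping is the same union bound you wrote.
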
 

\begin{proof}  The observation about efficiently constructing $F'$ from $F$ is immediate, as is part (1) since if an assignment satisfies a given clause of $F'$ then clearly it satisfies the corresponding clause of $F$.  Part (2) holds because each time a clause is replaced by its trimmed version, the total number of satisfying assignments is reduced by at most $2^{-w}\cdot 2^n = {\frac \eps {2M}} \cdot 2^n$.
\end{proof} 

Our algorithm will begin by trimming all wide clauses of $F$ (of width greater than $\log(2M/\eps)$) to have width exactly $\log(2M/\eps)$. By Observation~\ref{obs:trim}, if $F$ is $\eps$-satisfiable then the resulting $F'$ remains $(\eps/2)$-satisfiable, and furthermore any satisfying assignment of $F'$ is a satisfying assignment of the original CNF $F$.  

Combining all of the simple observations in this section, in order to prove Theorem~\ref{thm:main} it suffices to prove the following:

\begin{theorem} \label{thm:main2}
There is a deterministic algorithm with the following properties:  It is given as input a value $\eps>0$ and a CNF formula $F$ over $\{0,1\}^n$ with $M\geq n$ clauses, each of width at most $O(\log(M/\eps))$, such that $|F^{-1}(1)| \geq \eps 2^n$.  The algorithm runs in time
\[
\left({\frac {M} \eps}\right)^{\tilde{O}(\log \log M +  \log(1/\eps))^2}
\]
and outputs a satisfying assignment of $F$.
\end{theorem}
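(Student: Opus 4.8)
The plan is to instantiate the three-ingredient framework sketched in Section~\ref{sec:our-approach} with $\calC$ being the class of width-$O(\log(M/\eps))$ CNF formulas over $n$ variables, $\calC_\simple$ being the class of $w$-width CNFs for $w = \Theta(\log((\log M)/\eps))$, and then carefully track the parameters through one stage of the recursion, finally solving the resulting recurrence. Concretely, I would: (i) invoke the \cite{GMR13} pseudorandom switching lemma as the ``$\calC$-to-$\calC_\simple$ simplification lemma,'' in the relaxed ``$\delta$-lower-approximator'' form, obtaining a pseudorandom restriction distribution $\calR$ that keeps a $p = \exp(-\Theta(\log\log(M/\eps))^2)$ fraction of coordinates live, has star-set seed length $r_\SL$, and fails with probability $\delta_\SL$; (ii) combine it with the best PRG for $\calC_\simple$ (a $w$-width CNF PRG, which has seed length $r_\PRG = \polylog(\cdot)$, much shorter than the \cite{DETT10} bound for general CNFs) via the extended \cite{AjtaiWigderson:85} fact to conclude that the explicitly-enumerable family of $2^{r_\SL + r_\PRG}$ restrictions contains a ``good'' $\pi^*$ satisfying an approximate version of~(\ref{eq:good-pi}); (iii) use the \cite{GMR13} deterministic approximate counting algorithm, run to accuracy $\delta_\mcount$, to search through all $2^{r_\SL + r_\PRG}$ candidates and find a $\tilde\pi$ that is within $2\delta_\mcount$ of $\pi^*$; and (iv) recurse on $F\uhr\tilde\pi$.

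For the parameter-setting step I would choose all of $\delta_\SL$, $\delta_\PRG$, $\delta_\mcount$ to be $\Theta(\eps/(p^{-1}\ln n))$ so that the bias never drops below $\eps/2$ across the at most $p^{-1}\ln n$ stages, and then read off the resulting seed lengths and counting-accuracy requirements. With $w = \Theta(\log((\log M)/\eps))$, the \cite{GMR13} switching lemma forces $r_\SL = \tilde O(\log M + \log(1/\eps))^2$ and $p^{-1} = \exp(\tilde O(\log\log M + \log(1/\eps))^2)$; the width-$w$ CNF PRG has $r_\PRG = \tilde O(w^2 + \log(\cdot))$, which is $\tilde O(\log\log M + \log(1/\eps))^2$ as well; and the counting algorithm run to accuracy $\delta_\mcount = \eps p / \mathrm{polylog}$ costs $(M/\eps)^{\tilde O(\log\log M + \log(1/\delta_\mcount))} = (M/\eps)^{\tilde O(\log\log M + \log(1/\eps))^2}$ per call. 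Multiplying the $2^{r_\SL + r_\PRG}$ enumeration factor, the per-call counting cost, and the $p^{-1}\ln n$ stage count, and using $p^{-1} = (M/\eps)^{o(1)}$, I would obtain the claimed bound
\[
\left(\frac{M}{\eps}\right)^{\tilde O(\log\log M + \log(1/\eps))^2}.
\]
Correctness follows from the inductive bias bound: a straightforward induction over stages shows $\Ex[(F\uhr\pi_1\uhr\cdots\uhr\pi_i)(\bx)] \ge \eps - i\cdot(\delta_\SL+\delta_\PRG+2\delta_\mcount) \ge \eps/2 > 0$, so after all coordinates are fixed the output string satisfies $F$.

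The main obstacle I anticipate is not any single step but rather the bookkeeping of ``where do the $\log\log$ factors come from and why is the exponent squared.'' The square arises from two compounding sources: the \cite{GMR13} switching lemma itself only achieves width reduction with star-probability $p$ such that $\log(1/p) \approx (\text{width}) \cdot (\text{something})$ and failure probability that degrades as the target width shrinks, so driving $\calC_\simple$ down to width $\log((\log M)/\eps)$ rather than, say, constant width is what makes $\log(1/p)$ quadratic in $\log\log M$; and separately the PRG for width-$w$ CNFs has seed length quadratic in $w$. Getting the width parameter $w$ exactly right — small enough that the PRG seed length $\approx w^2$ and the counting accuracy $\log(1/\delta_\mcount) \approx \log(p^{-1}/\eps)$ are both $\tilde O(\log\log M + \log(1/\eps))^2$, yet large enough that the switching lemma's failure probability $\delta_\SL$ can be pushed below the required $\eps p/\mathrm{polylog}$ threshold — is the delicate optimization, and it is precisely the ``relaxation of~(\ref{eq:PSL}) to lower-approximators'' (flagged in the excerpt as crucially exploited) that gives enough slack to make this balance work. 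Everything else — the \cite{AjtaiWigderson:85} hybrid argument, the $2\delta_\mcount$ slack from approximate counting, the recursion unrolling — is routine given the framework already laid out.
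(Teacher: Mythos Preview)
Your proposal is correct and matches the paper's approach essentially line for line: the paper instantiates the three-ingredient framework with the \cite{GMR13} pseudorandom switching lemma (target class $\calC_\simple$ = width-$w'$ CNFs, $w' = \Theta(\log((\log M)/\eps))$), the \cite{GMR13} PRG for $\calC_\simple$, and the \cite{GMR13} counting algorithm, glued together via the lower-approximator form of the \cite{AjtaiWigderson:85} fact, and then recurses for at most $T = (2\ln n)/p$ stages with per-stage bias loss $\tau = \eps/(2T)$.

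One parameter slip to flag: your stated $r_\SL = \tilde O(\log M + \log(1/\eps))^2$ is presumably a typo for $\tilde O(\log\log M + \log(1/\eps))^2$ (the ``as well'' after your $r_\PRG$ bound suggests this). Taken literally it would give $2^{r_\SL} = (M/\eps)^{\tilde\Omega(\log(M/\eps))}$, exceeding the target; the actual value, from~(\ref{eq:r-SL}) with input width $w = O(\log(M/\eps))$, is only $r_\SL = \tilde O(\log(M/\eps))$, so $2^{r_\SL} = (M/\eps)^{\tilde O(1)}$ and the $\tilde O(\log\log M + \log(1/\eps))^2$ exponent in the final bound is driven by $T_\mcount$ (via the $n^{O(\log(w/\delta_\mcount))}$ factor with $\log(1/\delta_\mcount) \approx \log(1/\eps) + \log(1/p)$) rather than by the enumeration cost.
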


In the rest of the paper we prove Theorem~\ref{thm:main2} (so the number of clauses $M$ is assumed to be at least $n$ throughout the rest of the paper).

\section{The \cite{GMR13} pseudorandom switching lemma}
\label{sec:PSL}

As outlined in Section~\ref{sec:our-approach}, one of the main ingredients of our deterministic search framework is a ``pseudorandom $\calC$-to-$\calC_\simple$ simplification lemma". For $\calC$ being the class of CNF formulas, these are more commonly known as \emph{pseudorandom switching lemmas}---randomness efficient versions of the seminal switching lemmas~\cite{FSS:84,ajtai1983,yao1985,Hastad86} from circuit complexity---and they have been the subject of much research~\cite{AjtaiWigderson:85,AAIPR01,IMP12,GMR13,TX13,GW14}.

We will use a recent pseudorandom switching lemma of Gopalan et al.~\cite{GMR13} as it leads to the best overall running time.  In this pseudorandom switching lemma  $\calC_\simple$ is the class of ``narrow" (width-$w'$) CNFs. As alluded to in Section~\ref{sec:our-approach}, this is not quite a pseudorandom $\calC$-to-$\calC_\simple$ simplification lemma in the sense of~(\ref{eq:PSL}): rather than showing that $F \uhr \brho$ belongs to $\calC_\simple$ with high probability, the~\cite{GMR13} pseudorandom switching lemma only guarantees that $F \uhr \brho$ is \emph{sandwiched} by $F_{\text{upper}}, F_{\text{lower}} \in \calC_\simple$ with high probability. But as we show in the next section, the analysis we sketched in Section~\ref{sec:our-approach} extends to accommodate this; in fact, for our purposes it suffices for $F \uhr \brho$ just to have a lower sandwiching approximator in $\calC_\simple$. 

We recall a standard definition from pseudorandomness:


\begin{definition}[$p$-regular distributions]
A distribution $\calR_\stars$ over subsets of $[n]$ is said to be \emph{$p$-regular} if for each $i \in [n]$ we have $\Prx_{\boldsymbol{L} \leftarrow \calR_\stars}[i\in \boldsymbol{L}] = p.$
\end{definition} 

Our deterministic search framework requires that the pseudorandom $\calC$-to-$\calC_\simple$ simplification lemma holds with respect to a distribution over restrictions with the following structure: first a draw from a pseudorandom distribution $\calR_\stars$ selects a subset $\bL \sse [n]$ of coordinates which will ``receive $\ast$'s" (the $\bL$ive coordinates), and then the non-$\ast$ coordinates $[n] \setminus \bL$ are filled in uniformly at random with bits. The~\cite{GMR13} pseudorandom switching lemma satisfies this prescribed structure:


\begin{theorem}[{Theorem 5.3 of} \cite{GMR13}, pseudorandom switching lemma]
\label{thm:GMR-SL}
There is a universal constant $C > 0$ such that for all $w, w', \delta_\sand, \eta > 0$ and all $p$ satisfying 
\begin{equation} p \le \frac{\eta}{(w \log(1/\delta_{\sand}))^{C\log w}}, \label{eq:p}
\end{equation}
there is a $p$-regular distribution $\calR_\stars$ over subsets of $[n]$ that can be sampled efficiently using $r_\SL$ random bits where 
\begin{equation} r_\SL = O((\log w)(\log n + w' \log((\log w)/\eta)) + w \log(w \log(1/\delta_\sand))) \label{eq:r-SL}
\end{equation} 
\ignore{
\rnote{Was
$$
 r_\SL = O((\log w)(\log n + w' \log(1/\eta)) + w \log(w \log(1/\delta_\sand)))
$$
}
}and the following holds: for any width-$w$ CNF $F$, 
\[ \mathop{\Prx_{\bL\leftarrow\calR_\stars}}_{\brho\leftarrow \zo^{[n]\setminus \boldsymbol{L}}}[\, F\uhr \brho \text{~is not $\delta_{\sand}$-sandwiched by width-$w'$ CNFs}\,] \le \delta_\sand + \eta^{w'/4}.\] 
\end{theorem}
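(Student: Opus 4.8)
The plan is to follow the encoding-based proof of H{\aa}stad's switching lemma~\cite{Hastad86}, organized---as in the pseudorandom switching lemmas of~\cite{AjtaiWigderson:85,TX13,GMR13}---so that it survives when the set of live coordinates is drawn from a bounded-independence distribution rather than with fully independent coins. The first step is a purely structural reduction from the \emph{sandwiching} conclusion to a bound on \emph{canonical decision-tree depth}. For a restriction $\rho$ with live set $L$, let $\DTc(F\uhr\rho)$ be the canonical decision tree: scan the clauses of $F\uhr\rho$ in a fixed order, and on reaching the first clause not yet made true, query its surviving variables in a fixed order and recurse. Truncate $\DTc(F\uhr\rho)$ at depth $w'$, turning every length-$w'$ path that has not reached a leaf into a new leaf. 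Relabelling all truncated leaves $1$ gives a width-$w'$ CNF $g_u \ge F\uhr\rho$, relabelling them $0$ gives a width-$w'$ CNF $g_\ell \le F\uhr\rho$, and $g_u = g_\ell = F\uhr\rho$ whenever $\depth(\DTc(F\uhr\rho)) \le w'$. Hence if $F\uhr\rho$ is not $\delta_\sand$-sandwiched by width-$w'$ CNFs then in particular $F\uhr\rho$ is not itself a width-$w'$ CNF, so $\depth(\DTc(F\uhr\rho)) > w'$; it therefore suffices to prove
\[ \mathop{\Prx_{\bL\leftarrow\calR_\stars}}_{\brho\leftarrow\zo^{[n]\setminus\bL}}\big[\,\depth\big(\DTc(F\uhr\brho)\big) > w'\,\big]\ \le\ \delta_\sand + \eta^{w'/4}, \]
which is precisely the quantity that H{\aa}stad's lemma controls for truly random $p$-restrictions.

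I would then treat the two terms on the right separately. For a truly random $p$-regular restriction (keep each coordinate independently with probability $p$, fill the rest uniformly), H{\aa}stad's switching lemma gives $\depth(\DTc(F\uhr\brho)) > w'$ with probability at most $(5pw)^{w'}$; the hypothesis (\ref{eq:p}) on $p$ forces $5pw \le \eta^{1/4}$ (choosing the universal constant $C$ large enough handles the boundary cases), so the true-randomness case already contributes the $\eta^{w'/4}$ term. The remaining task is to show that replacing true randomness by the pseudorandom $p$-regular $\calR_\stars$ changes this probability by at most $\delta_\sand$. The crucial observation is that, although the event ``$\depth(\DTc(F\uhr\rho)) > w'$'' depends on all of $L$, any \emph{witness} for it---a canonical path of length $w'$---involves only $O(w')$ coordinates, and the Razborov-style injective encoding that proves the classical bound alters the live set only at those $O(w')$ coordinates (demoting them from live to dead) plus $O(w')$ bits of advice. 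Consequently every appeal to independence in the classical counting involves at most $k = O(w')$ coordinates, and if $\calR_\stars$ is a $p$-regular, $\delta'$-almost $k$-wise independent distribution with $\delta' \le \delta_\sand\, n^{-O(k)}$, the accumulated error is at most $\delta_\sand$. Such a distribution can be sampled with $O(k\log n + \log(1/\delta'))$ bits.

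To push the seed length down to the bound (\ref{eq:r-SL})---in particular so that $\log n$ is multiplied only by $\log w$ and $w'$ only by $\log((\log w)/\eta)$---I would perform the switch not in one step but iteratively, over $O(\log w)$ stages that halve the CNF width: stage $i$ turns the current width-$w_i$ CNF into a width-$(w_i/2)$ CNF by applying the one-step pseudorandom switching lemma above with a fresh $p_i$-regular, almost $O(w_i)$-wise independent distribution on the current live set. The live probabilities multiply, so $p = \prod_i p_i$ is roughly $\eta/(w\log(1/\delta_\sand))^{C\log w}$, which is the source of the exponent in (\ref{eq:p}); the seed lengths add; and since each stage produces only a width-$(w_i/2)$ \emph{sandwiching} pair rather than an exact representation, the per-stage slacks and failure probabilities must be tracked and telescoped, which is ultimately where the additive $\delta_\sand$ in the conclusion comes from. \textbf{The main obstacle} is this locality-and-composition bookkeeping: verifying that the canonical-decision-tree process and its Razborov encoding really do read only $O(w_i)$ coordinates of the live set---so that almost bounded independence suffices and the uniform fixed bits do no harm---and that $p$-regularity and bounded independence survive conditioning on the earlier stages' live sets, all while keeping the total seed length within (\ref{eq:r-SL}). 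The switching bound for genuinely random restrictions and the sandwiching reduction are standard; the randomness efficiency is the delicate part.
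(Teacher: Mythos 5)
The paper does not prove Theorem~\ref{thm:GMR-SL}; it is invoked as a black-box result, citing Theorem~5.3 of \cite{GMR13}. So the relevant comparison is to the proof in \cite{GMR13} itself, and there your proposal diverges substantially from what they actually do.

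Your sketch---reduce sandwiching to a bound on canonical decision-tree depth, prove the depth bound by Razborov's injective encoding, observe that the encoding touches only $O(w')$ live coordinates so that $p$-regular almost $O(w')$-wise independence suffices, and iterate over $O(\log w)$ width-halving stages---is essentially the strategy of Trevisan and Xue~\cite{TX13}, whose pseudorandom switching lemma produces small-\emph{depth decision trees}. That is not the route \cite{GMR13} take. Their pseudorandom switching lemma is a corollary of the central \emph{CNF sparsification} theorem of that paper: they first show (via a quasi-sunflower argument, with no restriction involved) that every width-$w$ CNF is $\delta_\sand$-sandwiched by width-$w$ CNFs with at most $(w\log(1/\delta_\sand))^{O(w)}$ clauses, and then apply a $p$-regular bounded-independence restriction to the sparse sandwichers and union-bound over their small number of clauses to argue that all surviving clauses shrink to width $w'$; the width reduction is again iterated over $O(\log w)$ stages. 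In that proof the $\delta_\sand$ term in the conclusion is the sparsification error, and the $w\log(w\log(1/\delta_\sand))$ summand in the seed length (\ref{eq:r-SL}) is exactly the logarithm of the sparse clause count needed to carry the union bound---neither of which has a natural analogue in your encoding-based argument. Your approach would in fact be proving a \emph{stronger} conclusion (small canonical DT depth, hence an exact width-$w'$ representation rather than merely a $\delta_\sand$-sandwich), and there is no a priori reason it should hit the specific parameters in (\ref{eq:p}) and (\ref{eq:r-SL}); you yourself flag the locality-and-composition bookkeeping as unresolved, and in the \cite{TX13} version of this argument the resulting seed length is worse. The encoding-based route is a legitimate way to obtain \emph{a} pseudorandom switching lemma that could be plugged into this paper's framework (the paper explicitly notes decision trees as an acceptable choice of $\calC_\simple$), but it does not reprove Theorem~\ref{thm:GMR-SL} with its stated parameters, and it is not the proof the paper is relying on.
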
 

We require pseudorandom restrictions that do not put down too few $\ast$'s.  This motivates the following corollary:

\begin{corollary}[Condition on having sufficiently many stars]
\label{cor:GMR-SL}
For the distribution $\calR_{\stars}$ defined in Theorem~\ref{thm:GMR-SL}, let $\calR_\stars'$ denote the distribution of $\bL \leftarrow\supp(\calR_\stars)$ conditioned on $\bL$ satisfying $|\bL|\ge pn/2$. Then for any width-$w$ CNF $F$,
 \[ \mathop{\Prx_{\bL\leftarrow\calR'_\stars}}_{\brho\leftarrow \zo^{[n]\setminus \boldsymbol{L}}}[\, F\uhr \brho \text{~is not $\delta_\sand$-sandwiched by width-$w'$ CNFs}\,] \le \frac{2(\delta_\sand + \eta^{w'/4})}{p}.\]
\end{corollary}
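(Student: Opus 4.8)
The plan is to derive Corollary~\ref{cor:GMR-SL} from Theorem~\ref{thm:GMR-SL} by a completely routine conditioning argument: the conditioned distribution $\calR_\stars'$ is just $\calR_\stars$ restricted to the event $E := \{|\bL| \ge pn/2\}$, so any bad-event probability under $\calR_\stars'$ is at most the corresponding probability under $\calR_\stars$ divided by $\Pr_{\bL\leftarrow\calR_\stars}[E]$. Thus the entire content of the proof reduces to showing that $\Pr_{\bL\leftarrow\calR_\stars}[\,|\bL| \ge pn/2\,] \ge p/2$ (or really any lower bound of the form $\Omega(p)$; the corollary states the clean bound $p/2$).

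First I would invoke $p$-regularity: since $\Pr[i \in \bL] = p$ for every $i \in [n]$, linearity of expectation gives $\E_{\bL\leftarrow\calR_\stars}[\,|\bL|\,] = pn$. Now I want to lower-bound $\Pr[|\bL| \ge pn/2]$. Let $q := \Pr[|\bL| \ge pn/2]$. Splitting the expectation according to whether $|\bL| \ge pn/2$, and using the trivial bound $|\bL| \le n$ on the high side and $|\bL| < pn/2$ on the low side, we get
\[
pn = \E[\,|\bL|\,] \le q \cdot n + (1-q)\cdot \frac{pn}{2} \le qn + \frac{pn}{2},
\]
which rearranges to $q \ge p/2$. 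This is the reverse-Markov-type step and is the only substantive inequality in the argument.

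Finally I would assemble the two pieces. Writing $B$ for the event ``$F\uhr\brho$ is not $\delta_\sand$-sandwiched by width-$w'$ CNFs'' (over the joint draw of $\bL$ and $\brho$), conditioning on $E$ gives
\[
\Pr_{\calR_\stars'}[B] = \frac{\Pr_{\calR_\stars}[B \wedge E]}{\Pr_{\calR_\stars}[E]} \le \frac{\Pr_{\calR_\stars}[B]}{\Pr_{\calR_\stars}[E]} \le \frac{\delta_\sand + \eta^{w'/4}}{p/2} = \frac{2(\delta_\sand + \eta^{w'/4})}{p},
\]
where the middle inequality drops the conjunction with $E$ and the numerator bound is exactly the conclusion of Theorem~\ref{thm:GMR-SL}. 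One should also note in passing that $\calR_\stars'$ can still be sampled efficiently — rejection sampling from $\calR_\stars$ succeeds with probability $\ge p/2$ — though since the corollary statement as given only asserts the probability bound, this remark may be deferred to wherever the sampling complexity is actually needed.

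I do not anticipate any genuine obstacle here: the only thing to be slightly careful about is that $\calR_\stars'$ is defined by conditioning on $|\bL| \ge pn/2$ while the event in Theorem~\ref{thm:GMR-SL} is over the joint distribution of $(\bL, \brho)$, so one must make sure the conditioning is applied only to the $\bL$-marginal and that $\brho \leftarrow \zo^{[n]\setminus\bL}$ is still drawn uniformly conditioned on $\bL$ — which is exactly how $\calR_\stars'$ is defined in the statement, so the factorized structure is preserved and the division-by-$\Pr[E]$ step is valid.
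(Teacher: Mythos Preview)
Your proposal is correct and matches the paper's proof essentially line for line: both use $p$-regularity to get $\E[|\bL|]=pn$, apply the same reverse-Markov bound to conclude $\Pr[|\bL|\ge pn/2]\ge p/2$, and then divide the Theorem~\ref{thm:GMR-SL} bad-event probability by this. Your writeup is in fact more explicit than the paper's, which simply asserts the $p/2$ lower bound without spelling out the splitting-the-expectation step.
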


\begin{proof} 
Since $\calR_\stars$ is $p$-regular we have that $\Ex_{\boldsymbol{L}\leftarrow \calR_\stars}[|\boldsymbol{L}|] = pn$, and so
\[ \Prx_{\boldsymbol{L}\leftarrow\calR_\stars}[\boldsymbol{L} \in \supp(\calR'_\stars)]  = \Prx_{\boldsymbol{L}\leftarrow\calR_\stars}\Big[|\boldsymbol{L}| \ge \frac{pn}{2}\Big] \ge \frac{p}{2}.\] 
Hence 
\begin{align*}
 \mathop{\Prx_{\bL\leftarrow\calR_\stars'}}_{\brho\leftarrow \zo^{[n]\setminus \boldsymbol{L}}}[\, F\uhr \brho \text{~is not $\delta_{\text{sand}}$} \cdots \,] 
 &=  \mathop{\Prx_{\bL\leftarrow\calR_\stars}}_{\brho\leftarrow \zo^{[n]\setminus \boldsymbol{L}}}[\, F\uhr \brho \text{~is not $\delta_{\text{sand}}$} \cdots  \mid \boldsymbol{L}\in \supp(\calR'_\stars)\,] \\
 &\le  \mathop{\Prx_{\bL\leftarrow\calR_\stars}}_{\brho\leftarrow \zo^{[n]\setminus \boldsymbol{L}}}[\, F\uhr \brho \text{~is not $\delta_{\text{sand}}$} \cdots \,]\cdot \frac1{\Pr[\boldsymbol{L}\in \supp(\calR'_\stars)]}  \\
 &\le  \frac{2(\delta_{\text{sand}} + \eta^{w'/4})}{p}.
\end{align*}
\vskip -18pt
\end{proof} 

\section{Bias preservation via pseudorandom switching lemmas}
\label{sec:AW} 

An important ingredient in our approach is a simple but ingenious observation due to Ajtai and Wigderson \cite{AjtaiWigderson:85} which we state and prove as Lemma \ref{lem:TX-switcheroo} below.  Informally, it says the following:
Let $F : \zo^n \to \zo$ be a Boolean function and suppose there is a partition of $[n]$ into $L$ and $[n] \setminus L$ with the following property: with high probability over a uniform random restriction $\brho$ fixing the coordinates in $[n] \setminus L$ and leaving
the coordinates in $L$ free, the function $F \uhr \brho$ falls into a class $\calC_\simple$ that is fooled by a distribution $\calD$ over $\{0,1\}^n$. Then the pseudorandom distribution over restrictions that fixes the coordinates in $L$ according to $\calD$ and leaves coordinates in $[n] \setminus L$ free approximately preserves the bias of $F$.

In fact, in our analysis we will only require that $F \uhr \brho$ has a lower approximator in $\calC_\simple$. This is because for our purposes (deterministic search) it suffices to approximately preserve the bias of $F$ only in one direction: we have to ensure that the bias of $F$ does not decrease by too much (so that we do not lose too many or all of the satisfying assignments), but we are fine if the bias increases.

\begin{lemma}[Implicit in \cite{AjtaiWigderson:85}]
\label{lem:TX-switcheroo}
Let $F :\zo^n \to \zo$ and $L \sse [n]$.
Fix a class $\calC_\simple$ of functions over $\{0,1\}^{n}$ and let $\calD$ be a distribution over $\{0,1\}^{n}$ that $\delta_\PRG$-fools $\calC_\simple$. Suppose that 
\begin{equation} \Prx_{\brho\leftarrow \zo^{[n]\setminus L}}[\, F\uhr \brho\text{~does not have a lower $\delta_\sand$-approximator in $\calC_\simple$}\,] \le \delta_\SL. \label{eq:SL-assumption}
\end{equation}
Then 
\[ \mathop{\Ex_{\bx\leftarrow\calU}}_{\by\leftarrow\calD}[ F(\bx_{[n] \setminus L},\by_L) ] \ge \Ex_{\bx\leftarrow\calU}[F(\bx)] - (\delta_\PRG + \delta_\sand + \delta_\SL).  \] 
\end{lemma}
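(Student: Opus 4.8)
The plan is to interpolate between the truly random distribution $\calU$ over $\zo^n$ and the hybrid distribution (uniform on $[n]\setminus L$, pseudorandom $\calD$ on $L$) by conditioning on the restriction $\brho$ that the uniform bits on $[n]\setminus L$ induce. Concretely, write
\[
\mathop{\Ex_{\bx\leftarrow\calU}}_{\by\leftarrow\calD}\big[ F(\bx_{[n]\setminus L},\by_L)\big]
= \Ex_{\brho\leftarrow\zo^{[n]\setminus L}}\Big[\Ex_{\by\leftarrow\calD}\big[(F\uhr\brho)(\by_L)\big]\Big],
\qquad
\Ex_{\bx\leftarrow\calU}[F(\bx)]
= \Ex_{\brho\leftarrow\zo^{[n]\setminus L}}\Big[\Ex_{\bz\leftarrow\calU}\big[(F\uhr\brho)(\bz_L)\big]\Big],
\]
so it suffices to lower bound, in expectation over $\brho$, the quantity $\Ex_{\by\leftarrow\calD}[(F\uhr\brho)(\by_L)] - \Ex_{\bz\leftarrow\calU}[(F\uhr\brho)(\bz_L)]$.

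First I would split the expectation over $\brho$ into the ``good'' event $G$ that $F\uhr\brho$ has a lower $\delta_\sand$-approximator $g_\brho\in\calC_\simple$, and its complement. On the bad event, which by hypothesis \eqref{eq:SL-assumption} has probability at most $\delta_\SL$, I simply use the trivial bound that the difference of the two conditional expectations is at least $-1$ (since $F$ is $\{0,1\}$-valued); this contributes at least $-\delta_\SL$. On the good event, I use the lower approximator: since $g_\brho\le F\uhr\brho$ pointwise, $\Ex_{\by\leftarrow\calD}[(F\uhr\brho)(\by_L)]\ge \Ex_{\by\leftarrow\calD}[g_\brho(\by_L)]$; since $\calD$ $\delta_\PRG$-fools $\calC_\simple$ and $g_\brho\in\calC_\simple$, this is $\ge \Ex_{\bz\leftarrow\calU}[g_\brho(\bz_L)] - \delta_\PRG$; and since $g_\brho$ is a $\delta_\sand$-approximator, $\Ex_{\bz\leftarrow\calU}[g_\brho(\bz_L)]\ge \Ex_{\bz\leftarrow\calU}[(F\uhr\brho)(\bz_L)] - \delta_\sand$. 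Chaining these gives that on the good event the difference is at least $-(\delta_\PRG+\delta_\sand)$.

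The one technical point to be careful about is that $\calC_\simple$ and $\calD$ are nominally defined over $\zo^n$ while $F\uhr\brho$ and its approximator live on the coordinate set $L$; I would either phrase the fooling/approximation statements as applying to the functions that ignore the coordinates outside $L$ (so that a $\delta_\PRG$-PRG for $\calC_\simple$ over $\zo^n$ restricted to the $L$-coordinates still fools the relevant functions), or simply note that the statement is invariant under this embedding. Also note that the ``$\delta_\sand$-approximator'' guarantee is about the gap $\Ex_{\bz\leftarrow\calU}[(F\uhr\brho)(\bz) - g_\brho(\bz)]\le\delta_\sand$ under the uniform distribution on the free coordinates, which is exactly what is used above; no sandwiching from the upper side is needed.

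Combining the two cases, $\Ex_{\by}[(F\uhr\brho)(\by_L)] - \Ex_{\bz}[(F\uhr\brho)(\bz_L)] \ge -(\delta_\PRG+\delta_\sand)\cdot\Pr[G] - 1\cdot\Pr[\overline G] \ge -(\delta_\PRG+\delta_\sand) - \delta_\SL$, and averaging over $\brho$ yields the claimed bound. This is essentially a routine hybrid argument; I do not anticipate a serious obstacle, the only mildly delicate step being the bookkeeping of the three error terms and making sure the coordinate-set mismatch between $\calC_\simple\subseteq\{0,1\}^n$ and the $L$-restricted functions is handled cleanly.
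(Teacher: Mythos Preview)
Your proposal is correct and is essentially the same argument as the paper's proof: both condition on $\brho\leftarrow\zo^{[n]\setminus L}$, use the chain $\E_\calD[F\uhr\brho]\ge\E_\calD[g_\brho]\ge\E_\calU[g_\brho]-\delta_\PRG\ge\E_\calU[F\uhr\brho]-\delta_\PRG-\delta_\sand$ on the good event, and absorb the bad event into the $\delta_\SL$ term. The paper simply writes the same inequalities in the reverse direction (upper-bounding $\E_\calU[F]$ rather than lower-bounding $\E_{\calU,\calD}[F(\bx_{[n]\setminus L},\by_L)]$), and glosses over the $\zo^n$-versus-$\zo^L$ bookkeeping that you flag explicitly.
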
 

\begin{proof}
If $F \uhr \rho$ has a lower $\delta_{\text{sand}}$-approximator $F' \in \calC_\simple$ then 
\begin{align*} 
\Ex_{\bx\leftarrow\calU}[ (F\uhr \rho)(\bx)] &\le \Ex_{\bx\leftarrow\calU}[F'(\bx)] + \delta_{\text{sand}} \tag*{($F'$ is a $\delta_\sand$-approximator for $F \uhr \rho$)}\\
&\le \bigg(\Ex_{\by\leftarrow\calD}[F'(\by)] + \delta_{\text{PRG}}\bigg) + \delta_{\text{sand}} \tag*{($\calD$ $\delta_{\text{PRG}}$-fools $F'$)} \\
&\le \bigg(\Ex_{\by\leftarrow\calD}[(F\uhr \rho)(\by)] + \delta_{\text{PRG}}\bigg) + \delta_{\text{sand}}, \tag*{(${F'} \le (F\uhr \rho)$ \text{~pointwise})} 
\end{align*}
and so 
\begin{align*}
\Ex_{\bx\leftarrow\calU}[F(\bx)] &= \Ex_{\brho\leftarrow\zo^{[n]\setminus L}}\Big[\Ex_{\bx\leftarrow\calU}[(F\uhr\brho)(\bx)]\Big] \\
&\le \bigg(\Ex_{\brho\leftarrow\zo^{[n]\setminus L}}\Big[\Ex_{\by\leftarrow\calD}[(F\uhr \brho)(\by)] \Big] + \delta_{\text{PRG}} + \delta_{\text{sand}}\bigg)  + \delta_{\text{SL}}  \tag*{((\ref{eq:SL-assumption}) and above)}\\
&=  \mathop{\Ex_{\bx\leftarrow\calU}}_{\by\leftarrow\calD}[ F(\bx_{[n] \setminus L},\by_L) ] + (\delta_{\text{PRG}} + \delta_{\text{sand}} + \delta_{\text{SL}}).
\end{align*}
This completes the proof. 
\end{proof} 

We will apply Lemma~\ref{lem:TX-switcheroo} with $\calC_\simple$ being the class of width-$w'$ CNFs (we will keep $w'$ a free parameter for now, but looking ahead we will ultimately set $w' = \Theta(\log w + \log((\log M)/\eps))$),
and $\calD$ being the distribution given by~\cite{GMR13}'s pseudorandom generator: 

\begin{theorem}[Theorem 3.1 of \cite{GMR13}, PRG for width-$w'$ CNFs]
\label{thm:GMR-PRG}
The class of width-$w'$ CNFs over $\{0,1\}^n$ can be $\delta_\PRG$-fooled by a distribution $\calD_\PRG$ which can be sampled with
\begin{equation} r_\PRG = O((w')^2 (\log(w' \log(1/\delta_\PRG)))^2 + w'\log(w')\log(1/\delta_\PRG) + \log\log n).\label{eq:r-PRG}
\end{equation} 
random bits. 
\end{theorem}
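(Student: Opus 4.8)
This statement is quoted verbatim from \cite{GMR13} and in our development it is invoked as a black box; what follows is the route I would take to reprove it, which is in the spirit of \cite{LV96,GMR13} and fits the iterated-pseudorandom-restriction paradigm used elsewhere in this paper. The plan is to combine a pseudorandom switching lemma --- of exactly the type already recorded here as Theorem~\ref{thm:GMR-SL}, but specialized so that the ``simple'' target class is the class of shallow decision trees rather than narrower CNFs --- with a cheap base-case generator, applied in a small number of stages.

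Concretely: choose a $*$-probability $p = \Theta(1/w')$ and, using Theorem~\ref{thm:GMR-SL} (with $\calC_\simple$ taken to be the shallow decision trees / narrow CNFs), sample a $p$-regular pseudorandom set $\bL$ of free coordinates together with uniform bits on $[n]\setminus\bL$; by the classical switching lemma a width-$w'$ CNF restricted this way is, except with probability $\le \delta_1$, computed by a decision tree of depth $O(\log(1/\delta_1))$, hence is (sandwiched by, or equal to) a CNF of width $O(\log(1/\delta_1))$. Thus one restriction collapses the width from $w'$ to $O(\log(1/\delta_1))$; iterating with gently shrinking $p_j$ and per-stage failure budget $\delta/R$ over $R = O(\log w')$ stages drives the surviving formula down to a constant-depth decision tree on the residual free coordinates (and when $\log(1/\delta_\PRG) \ge w'$ a single stage already suffices). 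The base-case generator --- $k$-wise independence, or a small-bias space, for $k$ polynomial in the terminal width --- finishes the job. The generator's seed is the concatenation of the $R$ per-stage seeds plus the base-case seed; correctness is the usual telescoping estimate: replacing the $j$-th truly random restriction by its pseudorandom surrogate changes $\Ex[F\uhr(\rho_1\cdots\rho_j)]$ by at most $\delta/R$ plus the base-case error, for a total deviation $\le \delta_\PRG$. Summing the per-stage seed lengths --- each polynomial in the current width and logarithmic in $1/\delta_\PRG$, with the $n$-dependence entering only mildly (and reducible to $\log\log n$ after a preliminary contraction to $\poly(w', \log(1/\delta_\PRG))$ relevant coordinates) --- yields the claimed $O((w')^2(\log(w'\log(1/\delta_\PRG)))^2 + w'\log(w')\log(1/\delta_\PRG) + \log\log n)$.

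The substance, and the part I expect to be hardest, is purely quantitative: one must choose the stagewise $*$-probabilities $p_j$ (which must shrink as the width shrinks for the switching lemma to keep biting), the per-stage failure budgets, and the terminal width, so that the widths genuinely collapse within $O(\log w')$ stages rather than plateauing, and so that the sum of the per-stage switching-lemma seed lengths telescopes to the stated bound; handling the base case (a narrow CNF that may still depend on all $n$ variables) and squeezing the $n$-dependence down to $\log\log n$ are the remaining delicate points. These are exactly the calculations carried out in \cite{GMR13}, which is why in this paper we simply cite the result.
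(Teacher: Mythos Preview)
The paper does not prove this statement at all: Theorem~\ref{thm:GMR-PRG} is simply quoted from \cite{GMR13} and used as a black box, with no accompanying proof or sketch (the only follow-up is a one-sentence remark that the \cite{DETT10} PRG could be substituted). You correctly recognize this in your first sentence, so there is nothing in the paper to compare your sketch against. Your outline of the iterated-pseudorandom-restriction approach is a faithful description of the \cite{GMR13} strategy and is appropriate as an expository gloss, but for the purposes of this paper the honest ``proof'' is exactly what you wrote at the end: we cite the result.
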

(We remark that the~\cite{DETT10} PRG for width-$w'$ $M$-clause CNFs can be used in place of Theorem~\ref{thm:GMR-PRG} in our analysis, and will result the same overall running time.)
\section{Existence of a bias-preserving restriction $\pi^*$} 
\label{sec:existence} 

We are ready to combine the results from the previous sections to prove the key structural fact underlying our search algorithm. Roughly speaking, the next lemma says that there is a small set of restrictions, all of which fix a significant fraction of coordinates, such that for every width-$w$ CNF $F$ there is at least one restriction in this set that approximately preserves the bias of $F$ from below. 

\begin{lemma}[Existence of a bias-preserving restriction]
\label{lem:key-lemma} 
For all $w, w', \delta_\sand, \delta_\PRG, \eta > 0$ and all $p$ satisfying assumption (\ref{eq:p}) of Theorem~\ref{thm:GMR-SL}, there is a distribution $\calR_\gentle$ over restrictions in $\{0,1,\ast\}^n$ such that the following hold: 
\begin{enumerate}
\item $\calR_\gentle$ is uniform over a multiset of at most $2^{r_\SL + r_\PRG}$ many outcomes, where 
\begin{align*}
r_\SL &= O((\log w)(\log n + w' \log((\log w)/\eta)) + w \log(w \log(1/\delta_\sand)))\\
r_\PRG &= O((w')^2 (\log(w' \log(1/\delta_\PRG)))^2 + w'\log(w')\log(1/\delta_\PRG) + \log\log n).
\end{align*} 
\item $|\pi^{-1}(\{0,1\})| \ge pn/2$ for all $\pi \in \supp(\calR_\gentle)$. 
\item For any width-$w$ CNF $F$ over $\{0,1\}^n$, there is at least one $\pi^* \in \supp(\calR_\gentle)$ such that 
\begin{equation} \Ex_{\bx\leftarrow\calU}[(F\uhr \pi^*)(\bx)] \ge \Ex_{\bx\leftarrow\calU}[F(\bx)] - (\delta_\PRG + \delta_\sand + \delta_\SL),\label{eq:bias-preservation}
\end{equation}
where 
\[ \delta_\SL =  \frac{2(\delta_\sand + \eta^{w'/4})}{p}.\] 
\end{enumerate} 
\end{lemma}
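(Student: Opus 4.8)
The plan is to assemble Lemma~\ref{lem:key-lemma} directly from the two ingredients developed in the previous two sections: the conditioned pseudorandom switching lemma (Corollary~\ref{cor:GMR-SL}) and the bias-preservation lemma (Lemma~\ref{lem:TX-switcheroo}), with the PRG of Theorem~\ref{thm:GMR-PRG} supplying the distribution $\calD = \calD_\PRG$ that fools width-$w'$ CNFs. The restriction distribution $\calR_\gentle$ will be the ``composed'' distribution: draw $\bL \leftarrow \calR_\stars'$ (the $p$-regular star-distribution conditioned on $|\bL| \ge pn/2$), then fix the coordinates in $[n]\setminus\bL$ according to a draw $\by \leftarrow \calD_\PRG$ restricted to those coordinates — that is, $\pi$ leaves $\bL$ as stars and sets $[n]\setminus\bL$ to $\by_{[n]\setminus\bL}$. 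Note the roles of $L$ and $[n]\setminus L$ are swapped relative to Lemma~\ref{lem:TX-switcheroo}: there the free coordinates are $L$ and we fix $[n]\setminus L$; here the stars (free coordinates of $\pi$) are the live set $\bL$ and we fix its complement. So I will apply Lemma~\ref{lem:TX-switcheroo} with its ``$L$'' instantiated as our $\bL$ and its ``$[n]\setminus L$'' as $[n]\setminus\bL$.

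First I would verify item~1 (the multiset size and seed length): $\bL$ is sampled with $r_\SL$ random bits by Theorem~\ref{thm:GMR-SL} (the conditioning in Corollary~\ref{cor:GMR-SL} only shrinks the support, and one can keep the same sampler by, e.g., outputting a fixed fallback set when the sampled $\bL$ has too few stars — this does not increase the seed length and preserves the bound of the corollary up to the stated factor), and $\by$ is sampled with $r_\PRG$ random bits by Theorem~\ref{thm:GMR-PRG} with $\calC_\simple$ the class of width-$w'$ CNFs; concatenating the two seeds gives a distribution $\calR_\gentle$ that is uniform over a multiset of at most $2^{r_\SL + r_\PRG}$ outcomes, with $r_\SL, r_\PRG$ exactly as in~(\ref{eq:r-SL}) and~(\ref{eq:r-PRG}). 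Item~2 is immediate from the conditioning: every $\bL \in \supp(\calR_\stars')$ has $|\bL| \ge pn/2$, and $\pi$ fixes precisely the coordinates in $[n]\setminus\bL$, so $|\pi^{-1}(\{0,1\})| = n - |\bL|$ — wait, this needs care: $\pi$ fixes $[n]\setminus\bL$, which has size $n - |\bL| \le n - pn/2$, not $\ge pn/2$. I should instead take $\pi$ to fix $\bL$ and leave $[n]\setminus\bL$ free; then $|\pi^{-1}(\{0,1\})| = |\bL| \ge pn/2$ as required, and accordingly in Lemma~\ref{lem:TX-switcheroo} the free set is $[n]\setminus\bL$ and the fixed set is $\bL$, so the PRG-fooled distribution $\calD_\PRG$ lives on the $\bL$-coordinates and the switching-lemma restriction $\brho$ fixes $\bL$ and leaves $[n]\setminus\bL$ free. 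But Corollary~\ref{cor:GMR-SL} gives sandwiching when $\brho$ fixes $[n]\setminus\bL$ and leaves $\bL$ free, which is the opposite. The resolution — and the step I expect to require the most care — is that the $p$-regular structure is symmetric in the right way: I will apply the switching lemma and the AW lemma with the live set being $[n]\setminus\bL$ (so $\pi$ stars out $[n]\setminus\bL$)... no. Let me reconcile once and for all: Theorem~\ref{thm:GMR-SL} says $F\uhr\brho$ is sandwiched where $\brho$ has live set $\bL \leftarrow \calR_\stars$ and fixes $[n]\setminus\bL$ uniformly. Lemma~\ref{lem:TX-switcheroo} with $L := \bL$ requires exactly this hypothesis (its~(\ref{eq:SL-assumption}) is a statement about $\brho \leftarrow \zo^{[n]\setminus L}$, i.e.\ $\brho$ fixing $[n]\setminus L$) and concludes that fixing $L = \bL$ according to $\calD_\PRG$ and leaving $[n]\setminus\bL$ free preserves the bias from below. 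So $\pi$ fixes $\bL$ — giving $|\pi^{-1}(\{0,1\})| = |\bL| \ge pn/2$, matching item~2 — and everything is consistent; there is no swap after all, I just conflated ``stars of $\brho$'' with ``stars of $\pi$'', which are complementary by design.

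With the bookkeeping settled, item~3 follows by averaging. Fix a width-$w$ CNF $F$. Corollary~\ref{cor:GMR-SL} gives
\[
\Ex_{\bL\leftarrow\calR_\stars'}\Big[\Prx_{\brho\leftarrow\zo^{[n]\setminus\bL}}[\,F\uhr\brho\text{ is not }\delta_\sand\text{-sandwiched by width-}w'\text{ CNFs}\,]\Big] \le \frac{2(\delta_\sand + \eta^{w'/4})}{p} =: \delta_\SL.
\]
By Markov's inequality (or just the averaging argument), there exists a particular $L \in \supp(\calR_\stars')$ for which the inner probability is at most $\delta_\SL$; being $\delta_\sand$-sandwiched in particular yields a lower $\delta_\sand$-approximator in $\calC_\simple$, so hypothesis~(\ref{eq:SL-assumption}) of Lemma~\ref{lem:TX-switcheroo} holds for this $L$ with parameters $\delta_\SL$, $\delta_\sand$. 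Applying Lemma~\ref{lem:TX-switcheroo} with $\calC_\simple$ the width-$w'$ CNFs and $\calD = \calD_\PRG$ (which $\delta_\PRG$-fools $\calC_\simple$ by Theorem~\ref{thm:GMR-PRG}) gives
\[
\mathop{\Ex_{\bx\leftarrow\calU}}_{\by\leftarrow\calD_\PRG}[\,F(\bx_{[n]\setminus L}, \by_L)\,] \ge \Ex_{\bx\leftarrow\calU}[F(\bx)] - (\delta_\PRG + \delta_\sand + \delta_\SL).
\]
The left-hand side is an average over $y \in \supp(\calD_\PRG)$ of $\Ex_\bx[(F\uhr\pi)(\bx)]$ where $\pi$ is the restriction with live set $[n]\setminus L$ fixing $L$ according to $y$; such $\pi$ all lie in $\supp(\calR_\gentle)$. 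Hence some $\pi^* \in \supp(\calR_\gentle)$ achieves at least this average value, which is exactly~(\ref{eq:bias-preservation}). Note that $p$ satisfies~(\ref{eq:p}) by hypothesis, so Theorem~\ref{thm:GMR-SL} and Corollary~\ref{cor:GMR-SL} apply with the stated $r_\SL$, completing the proof. The only genuinely delicate point, as flagged, is keeping straight which coordinate set is ``live'' in each of the two lemmas; everything else is direct substitution.
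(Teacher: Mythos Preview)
Your proposal is correct and follows essentially the same approach as the paper: define $\calR_\gentle$ by drawing $\bL \leftarrow \calR'_\stars$ and $\by \leftarrow \calD_\PRG$, set $\pi_i = y_i$ for $i \in \bL$ and $\pi_i = \ast$ otherwise, then verify items~1 and~2 directly and obtain item~3 by first averaging over $\bL$ (via Corollary~\ref{cor:GMR-SL}) to find a good $L$, applying Lemma~\ref{lem:TX-switcheroo} with $\calD = \calD_\PRG$, and finally averaging over $y$. Your mid-proof detour about which coordinate set is live is resolved correctly (the stars of $\brho$ in the switching lemma and the stars of $\pi$ in $\calR_\gentle$ are complementary by design), and your ``fallback'' trick for handling the conditioning is unnecessary---since $\calR_\stars$ is uniform over a multiset of $2^{r_\SL}$ outcomes, conditioning on $|\bL| \ge pn/2$ simply yields the uniform distribution over a sub-multiset of size at most $2^{r_\SL}$---but it is not wrong.
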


\begin{proof} 
The distribution $\calR_\gentle$ is defined as follows: to make a draw $\bpi \leftarrow\calR_\gentle$, 
\begin{enumerate}
\item Draw $\bL \leftarrow\calR'_\stars$, the distribution over subsets $L \sse [n]$ defined in Corollary~\ref{cor:GMR-SL}. 
\item Draw $\by\leftarrow\calD_\PRG$, the distribution over $\{0,1\}^n$ from Theorem~\ref{thm:GMR-PRG} that $\delta_\PRG$-fools $w'$-CNFs.
\item Output the restriction $\bpi\in \{0,1,\ast\}^n$ where
\[ 
\bpi_i = \begin{cases}
\by_i & \text{if $i \in \bL$}\\
\ast & \text{otherwise.}
\end{cases} 
\] 
\end{enumerate}
By Corollary~\ref{cor:GMR-SL} and Theorem~\ref{thm:GMR-PRG}, we have that $\calR'_\stars$ is uniform over a multiset of at most $2^{r_\SL}$ outcomes
and $\calD_\PRG$ is uniform over a multiset of $2^{r_\PRG}$ many outcomes, and hence $\calR_\gentle$ is uniform over a multiset of at most $2^{r_\SL + r_\PRG}$ many outcomes. By its definition, the distribution $\calR'_\stars$ satisfies $|L| \ge pn/2$ for all $L \in \supp(\calR'_\stars)$, and hence $|\pi^{-1}(\{0,1\})| \ge pn/2$ for all $\pi \in \supp(\calR_\gentle)$. 

It remains to justify the third claim above. For any width-$w$ CNF $F$, by Corollary~\ref{cor:GMR-SL} there must be at least one $L \in \supp(\calR'_\stars)$ that satisfies the assumption (\ref{eq:SL-assumption}) of Lemma~\ref{lem:TX-switcheroo} with $\calC_\simple$ being the class of width-$w'$ CNFs and $\delta_\SL = 2(\delta_\sand + \eta^{w'/4})/p$. For such an $L$, it follows from Lemma~\ref{lem:TX-switcheroo} and the fact that $\calD_\PRG$ $\delta_\PRG$-fools $\calC_\simple$ that  
\[ \mathop{\Ex_{\bx\leftarrow\calU}}_{\by\leftarrow\calD_\PRG}[ F(\bx_{[n] \setminus L},\by_L) ] \ge \Ex_{\bx\leftarrow\calU}[F(\bx)] - (\delta_\PRG + \delta_\sand + \delta_\SL),  \]
and hence  
\[ \Ex_{\bx\leftarrow\calU}[ F(\bx_{[n] \setminus L},y_L) ] \ge \Ex_{\bx\leftarrow\calU}[F(\bx)] - (\delta_\PRG + \delta_\sand + \delta_\SL)  \]
for at least one $y \in \supp(\calD_\PRG)$. This pair $(y,L)$ therefore defines a restriction $\pi^* \in \supp(\calR_\gentle)$---the restriction that fixes the coordinates in $L$ according to $y$---that satisfies (\ref{eq:bias-preservation}), and the proof is complete.
\end{proof} 

\section{Finding $\pi^*$, or a restriction $\tilde{\pi}$ that is almost as good} 
\label{sec:find} 

 To find a restriction that (approximately) satisfies (\ref{eq:bias-preservation}) we will approximate the bias of $F \uhr \pi$ for all candidates $\pi \in \supp(\calR_\gentle)$ using a deterministic approximate counting algorithm for CNF formulas:
\begin{theorem}[Theorem~4.6 of~\cite{GMR13} (second equation before end of proof), approximate counting algorithm]
\label{thm:GMR-count}
There is a deterministic algorithm that runs in time
\[ T_\mcount = M n^{O(\log(w/\delta_\mcount))}(\log n)^{O(w)} 2^{O(w\log(w/\delta_\mcount)(\log\log(w/\delta_\mcount))^2)}\] 
and
$\delta_\mcount$-approximates the bias of any $M$-clause width-$w$ CNF $F$ over $\{0,1\}^n$, i.e. it outputs a value 
$v \in [0,1]$ such that $|v - \E_{\bx \leftarrow \zo^n}[F(\bx)]| \leq \delta_\mcount.$
\end{theorem}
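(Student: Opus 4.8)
Theorem~\ref{thm:GMR-count} is quoted from~\cite{GMR13} --- it is the running-time bound appearing in the proof of their Theorem~4.6 --- so strictly there is nothing new to prove, and the ``proof'' is essentially a pointer to that paper; nonetheless let me describe the approach one would take, which is in essence the approach of~\cite{GMR13}. The plan has three pieces. First, \textbf{sparsification}: I would invoke the CNF sparsification lemma of~\cite{GMR13} (the dual of their DNF sparsification lemma) to replace the input $M$-clause width-$w$ CNF $F$ by a pair of width-$w$ CNFs $F_\ell \le F \le F_u$ that $(\delta/3)$-sandwich $F$ and whose clause count $m'$ depends only on $w$ and $1/\delta$, not on $M$ or $n$. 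This is the step that keeps the dependence on $M$ down to the single linear factor (needed just to read $F$ and perform the sparsification), and it reduces the task to $(\delta/3)$-approximating the bias of a width-$w$ CNF with only $m'$ clauses.

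Second, \textbf{pseudorandom width reduction}: writing $\E[F_\ell] = \E_{\brho}\big[\,\E[F_\ell\uhr\brho]\,\big]$ for a pseudorandom restriction $\brho$ drawn from a switching lemma in the style of Theorem~\ref{thm:GMR-SL} (with seed length $r_\SL$ and leaving a $p$-fraction of coordinates free), I would sum over all $2^{r_\SL}$ restriction seeds; with high probability $F_\ell\uhr\brho$ is sandwiched by a strictly narrower CNF, on which the algorithm recurses, while the small fraction of ``bad'' seeds and the sandwiching slacks contribute only a controlled additive error. Iterating this width reduction down to $O(1)$ width, and finishing at the base case with a short-seed PRG for narrow CNFs (e.g.\ Theorem~\ref{thm:GMR-PRG}, whose seed enumeration one can average over to get an approximate count), produces a terminating recursion; the product of the per-level enumeration costs, together with the base-case PRG cost and the $n$- and $\log n$-dependent terms, is what must be shown to equal $M\, n^{O(\log(w/\delta))}(\log n)^{O(w)}2^{O(w\log(w/\delta)(\log\log(w/\delta))^{2})}$.

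Third, \textbf{bookkeeping}: one chooses the width schedule $w = w_0 > w_1 > \cdots$, the star-probabilities $p_i$ (each legal for~(\ref{eq:p}) relative to $w_i,w_{i+1}$), and the per-level accuracies $\delta_i$ so that the failure probabilities $\delta_i + \eta_i^{w_{i+1}/4}$ and the sandwiching slacks sum to $O(\delta)$ while the per-level costs multiply out to the claimed bound. I expect this bookkeeping to be the real work: each of the three constructs --- the sparsification bound on $m'$, the switching-lemma seed length $r_\SL$, and the narrow-CNF PRG seed length $r_\PRG$ --- has its own delicate dependence on $w$, $\delta$, and $n$, and the schedule has to be tuned (e.g.\ a geometric schedule $w_i \approx w/2^{i}$, or a two-phase schedule $w \to \Theta(\log(w/\delta)) \to O(1)$) so that no single level of the recursion dominates and the $O(\delta)$ error budget is spread evenly across levels. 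The one other substantial ingredient is the sparsification lemma itself --- the proof that $m'$ can be taken independent of $M$ --- but that argument is self-contained in~\cite{GMR13} and I would simply cite it.
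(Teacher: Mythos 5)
Your reading is correct: the paper does not prove Theorem~\ref{thm:GMR-count} at all but simply cites it as Theorem~4.6 of~\cite{GMR13}, so your ``proof'' being a pointer to that paper matches the paper's own treatment exactly. Your accompanying sketch of the~\cite{GMR13} internals (sparsification to make the clause count independent of $M$, recursive pseudorandom width reduction via a switching lemma, and a narrow-CNF PRG at the base case, with bookkeeping over the width schedule and error budget) is a fair high-level summary of that algorithm, though it goes beyond what the present paper establishes or needs to establish.
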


Combining Lemma~\ref{lem:key-lemma} and Theorem~\ref{thm:GMR-count}, we get: 

\begin{corollary}[One stage of our recursive algorithm]
\label{cor:single-step} 
There is a deterministic algorithm $A$ with the following guarantee. Given as input an $M$-clause width-$w$ CNF $F$ 
over $\{0,1\}^n$ and parameters $w',\delta_\sand,\delta_\PRG,\delta_\mcount, \eta > 0$ and $p$ satisfying assumption (\ref{eq:p}) of Theorem~\ref{thm:GMR-SL}, 
\begin{enumerate}
\item $A$ runs in time 
\[ \exp(r_\SL(n,w,w',\eta,\delta_\sand) + r_\PRG(n,w',\delta_\PRG)) \cdot T_\mcount(n,M,w,\delta_\mcount),\]  where $r_\SL$ and $r_\PRG$ are as defined in Lemma~\ref{lem:key-lemma}, and $T_\mcount$ is as defined in Theorem~\ref{thm:GMR-count}. 
\item $A$ outputs a restriction $\tilde{\pi} \in \{0,1,\ast\}^n$ such that 
\begin{enumerate}
\item $|\tilde{\pi}^{-1}(\{0,1\})| \ge pn/2$, 
\item  $\tilde{\pi}$ approximately preserves the bias of $F$ from below: 
\[ \Ex_{\bx\leftarrow\calU}[(F\uhr \tilde{\pi})(\bx)] \ge \Ex_{\bx\leftarrow\calU}[F(\bx)] - (\delta_\PRG + \delta_\sand + \delta_\SL) - 2\delta_\mcount, \]
where 
\[ \delta_\SL =  \frac{2(\delta_\sand + \eta^{w'/4})}{p}.\] 
\end{enumerate} 
\end{enumerate}
\end{corollary}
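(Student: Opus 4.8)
The plan is to have $A$ enumerate the entire (multiset) support of the distribution $\calR_\gentle$ produced by Lemma~\ref{lem:key-lemma} and then use the deterministic approximate counting algorithm of Theorem~\ref{thm:GMR-count} to select the restriction under which $F$ appears most heavily satisfied. Concretely, $A$ cycles through all $2^{r_\SL}$ seeds of the star-placement sampler and all $2^{r_\PRG}$ seeds of $\calD_\PRG$; using the efficient samplers guaranteed by Corollary~\ref{cor:GMR-SL} and Theorem~\ref{thm:GMR-PRG}, each pair of seeds yields in $\poly(n)$ time a restriction $\pi$, and $A$ discards those $\pi$ with $|\pi^{-1}(\{0,1\})| < pn/2$ (so that every surviving candidate satisfies claim 2(a)), retaining a multiset that still contains all of $\supp(\calR_\gentle)$. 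For each retained $\pi$, $A$ forms the restricted formula $F\uhr\pi$ and runs the algorithm of Theorem~\ref{thm:GMR-count} on it to obtain a value $v_\pi$ with $|v_\pi - \Ex_{\bx\leftarrow\calU}[(F\uhr\pi)(\bx)]| \le \delta_\mcount$; finally $A$ outputs the restriction $\tilde\pi$ achieving the largest $v_{\tilde\pi}$.

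For this to be legitimate we must check that $F\uhr\pi$ is a valid input to Theorem~\ref{thm:GMR-count}. This holds because restricting a CNF only deletes clauses (those made true) and deletes literals within clauses (those made false), so $F\uhr\pi$ is a CNF of width at most $w$ with at most $M$ clauses, and padding back up to exactly $M$ clauses (if one insists) changes nothing; thus Theorem~\ref{thm:GMR-count} applies with the same parameters $n, M, w, \delta_\mcount$. The per-seed cost is then $\poly(n)$ for sampling and forming $F\uhr\pi$ plus $T_\mcount(n,M,w,\delta_\mcount)$ for the counting call, and since $T_\mcount$ dominates the $\poly(n)$ overhead the total running time is $2^{r_\SL + r_\PRG}\cdot T_\mcount(n,M,w,\delta_\mcount)$, i.e.\ $\exp(r_\SL(n,w,w',\eta,\delta_\sand) + r_\PRG(n,w',\delta_\PRG))\cdot T_\mcount(n,M,w,\delta_\mcount)$ with $r_\SL, r_\PRG$ exactly as in Lemma~\ref{lem:key-lemma}. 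Claim 2(a) is then immediate since $\tilde\pi$ is a surviving candidate, and each surviving candidate (in particular every $\pi\in\supp(\calR_\gentle)$, by part~2 of Lemma~\ref{lem:key-lemma}) fixes at least $pn/2$ coordinates.

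Claim 2(b) is the standard ``argmax beats the good candidate'' step. Since the hypotheses on $w,w',\delta_\sand,\delta_\PRG,\eta,p$ are inherited verbatim, part~3 of Lemma~\ref{lem:key-lemma} gives a $\pi^*\in\supp(\calR_\gentle)$ with $\Ex_{\bx\leftarrow\calU}[(F\uhr\pi^*)(\bx)] \ge \Ex_{\bx\leftarrow\calU}[F(\bx)] - (\delta_\PRG + \delta_\sand + \delta_\SL)$ for $\delta_\SL = 2(\delta_\sand + \eta^{w'/4})/p$; moreover $\pi^*$ is among the candidates $A$ evaluated, so $v_{\tilde\pi}\ge v_{\pi^*}$. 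Chaining the two $\delta_\mcount$-accuracy bounds,
\[ \Ex_{\bx\leftarrow\calU}[(F\uhr\tilde\pi)(\bx)] \ \ge\ v_{\tilde\pi} - \delta_\mcount \ \ge\ v_{\pi^*} - \delta_\mcount \ \ge\ \Ex_{\bx\leftarrow\calU}[(F\uhr\pi^*)(\bx)] - 2\delta_\mcount, \]
which by the displayed bound on $\pi^*$ is at least $\Ex_{\bx\leftarrow\calU}[F(\bx)] - (\delta_\PRG + \delta_\sand + \delta_\SL) - 2\delta_\mcount$, as required. There is no genuine obstacle: all the structural content sits in Lemma~\ref{lem:key-lemma}, and the only point requiring a moment's care is that the approximate counter is fed the restricted formulas $F\uhr\pi$ rather than $F$ itself (handled by the width/clause-count remark above).
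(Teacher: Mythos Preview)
Your proof is correct and follows essentially the same approach as the paper's: enumerate the support of $\calR_\gentle$, run the \cite{GMR13} approximate counter on each restricted formula, and output the restriction with the largest estimate, then chain the two $\delta_\mcount$ errors around the guaranteed good restriction $\pi^*$ from Lemma~\ref{lem:key-lemma}. Your version is slightly more explicit---spelling out the seed enumeration with the discard step that implements the conditioning in $\calR'_\stars$, and noting that $F\uhr\pi$ remains an $M$-clause width-$w$ CNF so that Theorem~\ref{thm:GMR-count} applies---but these are refinements of the same argument rather than a different route.
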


\begin{proof}
The algorithm $A$ cycles through all (at most) $2^{r_\SL + r_\PRG}$ many restrictions $\pi$ in the support of the distribution $\calR_\gentle$ defined in Lemma~\ref{lem:key-lemma}, and for each one uses~\cite{GMR13}'s approximate counting algorithm in Theorem~\ref{thm:GMR-count} to approximate the bias of $F \uhr \pi$ to accuracy $\delta_\mcount$.  $A$ outputs the restriction $\tilde{\pi}$ for which  its estimate of the bias of $F \uhr \tilde{\pi}$ is the largest. 

The bound on the running time of $A$ is an immediate consequence of Lemma~\ref{lem:key-lemma} and Theorem~\ref{thm:GMR-count}, as is item {\it{$2(a)$}} in the claim. It remains to verify that $\tilde{\pi}$ satisfies {\it $2(b)$}. By Lemma~\ref{lem:key-lemma}, there is at least one $\pi^* \in \supp(\calR_\gentle)$ satisfying 
\[  \Ex_{\bx\leftarrow\calU}[(F\uhr \pi^*)(\bx)] \ge \Ex_{\bx\leftarrow\calU}[F(\bx)] - (\delta_\PRG + \delta_\sand + \delta_\SL). \] 
By the correctness of~\cite{GMR13}'s approximate counting algorithm, $A$'s estimate of the bias $F\uhr \pi^*$ is at least 
\[ \Ex_{\bx\leftarrow\calU}[(F\uhr \pi^*)(\bx)] - \delta_\mcount \ge  \Ex_{\bx\leftarrow\calU}[F(\bx)] - (\delta_\PRG + \delta_\sand + \delta_\SL) - \delta_\mcount, \] 
and hence so is its estimate of the bias of $F \uhr \tilde{\pi}$. Finally, again by the correctness of~\cite{GMR13}'s approximate counting algorithm, we conclude that the true bias of $F \uhr \tilde{\pi}$ is within $\delta_\mcount$ of $A$'s estimate, and hence 
\begin{align*}
\Ex_{\bx\leftarrow\calU}[(F\uhr \pi^*)(\bx)] &\ge  \bigg(\Ex_{\bx\leftarrow\calU}[F(\bx)] - (\delta_\PRG + \delta_\sand + \delta_\SL) - \delta_\mcount\bigg) - \delta_\mcount \\ 
&= \Ex_{\bx\leftarrow\calU}[F(\bx)] - (\delta_\PRG + \delta_\sand + \delta_\SL) - 2\delta_\mcount. 
\end{align*}
This completes the proof. 
\end{proof}

\subsection{Applying Corollary~\ref{cor:single-step}: setting of parameters} 
\label{sec:glorious}

We first introduce two more parameters $T \in \N$ and $\tau \in (0,1)$ to denote 
\[ T := \frac{2\ln n}{p} \quad \text{and} \quad \tau := \frac{\eps}{2T}. \]
Looking ahead, the semantics of $T$ and $\tau$ are as follows: each stage of our recursive search algorithm---a call to the subroutine in Corollary~\ref{cor:single-step}---fixes at least a $p/2$ fraction of the remaining coordinates (recall item $2(a)$ of Corollary~\ref{cor:single-step}), so $T$ is chosen so that after $T$ {stages} the number of unfixed coordinates is at most 
\[ n\cdot (1-p/2)^T = n\cdot (1-p/2)^{(2\ln n)/p} < 1, \]  
i.e.~we will have arrived at an actual assignment to the CNF $F$. Since $T$ is an upper bound on the number of calls to the subroutine in Corollary~\ref{cor:single-step}, we will set parameters so that the bias of $F$ is preserved to within an additive $\tau = \eps/2T$ in each call.  This ensures that the bias of $F$ remains at least 
\[ \eps - \tau\cdot T = \frac{\eps}{2}  > 0\] 
throughout, and hence the final assignment we arrive at is in fact a satisfying assignment of $F$.

With these definitions of $T$ and $\tau$ in hand, we will invoke the algorithm in Corollary~\ref{cor:single-step} with the following choice of parameters: 
\begin{align*}
 p  &= \left(\frac1{w\log((\log M)/\eps)}\right)^{2C\log w}, \\
 \eta &= \frac1{w\log((\log M)/\eps)}, \\
  w' &= 16C\log w + 4\log\left(\frac{192\ln M}{\eps}\right), \end{align*}
where $C > 0$ is the universal constant from Theorem~\ref{thm:GMR-SL}, and 
\[ \delta_\mcount = \frac{\tau}{3},\qquad \delta_\PRG = \frac{\tau}{6}, \qquad \delta_\sand = \frac{p\tau}{48}. \]

The next proposition justifies our choice of parameters: 

\begin{proposition}
\label{prop:verify}
For this choice of parameters, we have that 
\begin{enumerate}
\item $p, \eta$, and $\delta_\sand$ satisfy assumption (\ref{eq:p}) of Theorem~\ref{thm:GMR-SL}: 
\[ p \le \frac{\eta}{(w \log(1/\delta_{\sand}))^{C\log w}}. \] 
\item For $\delta_\SL = 2(\delta_\sand + \eta^{w'/4})/p$, 
\[ \delta_\PRG + \delta_\sand + \delta_\SL + 2\delta_\mcount \le \tau.\] 
\end{enumerate} 
\end{proposition}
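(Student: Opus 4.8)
The proof is a direct verification: one substitutes the stated values of $p,\eta,w',\delta_\mcount,\delta_\PRG,\delta_\sand$ --- together with $T=2\ln n/p$ and $\tau=\eps/(2T)=\eps p/(4\ln n)$ --- and reduces each item to an elementary numerical inequality. Throughout I treat all unqualified logarithms as base $2$; I use the standing assumption $M\ge n$; I recall that these parameter choices are applied with $w$ equal to the trimmed width $\log(2M/\eps)=\Theta(\log(M/\eps))$, so in particular $w\ge\log M\ge\log n$; and I assume $M$ is larger than a suitable absolute constant $C_0$ (if $M<C_0$ then $n\le M<C_0$, and a satisfying assignment is found by brute force over all fewer than $2^{C_0}$ assignments).

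For item~1, substituting $p=\eta^{2C\log w}$ and clearing denominators shows that $p\le\eta/(w\log(1/\delta_\sand))^{C\log w}$ is equivalent to $\big(\eta^2w\log(1/\delta_\sand)\big)^{C\log w}\le\eta$. Since $C\log w\ge 2$ (true once $M\ge C_0$), it suffices that $\eta^2w\log(1/\delta_\sand)\le\sqrt\eta$, i.e.\ --- using $1/\eta=w\log((\log M)/\eps)$ --- that
\[ \log(1/\delta_\sand)\ \le\ \frac{(1/\eta)^{3/2}}{w}\ =\ \sqrt{w}\,\big(\log((\log M)/\eps)\big)^{3/2}. \]
To finish, substitute $\delta_\sand=\eps p^2/(192\ln n)$, use $M\ge n$ to replace $\ln n$ by $\ln M$, and expand $\log(1/p)=2C\log w\cdot\log\!\big(w\log((\log M)/\eps)\big)$; this yields
\[ \log(1/\delta_\sand)\ =\ O\!\big(\log\log M+\log(1/\eps)+\log^2 w+\log w\log\log((\log M)/\eps)\big), \]
and each term here is at most $\sqrt{w}\,(\log((\log M)/\eps))^{3/2}$: the first two since both are at most $\log((\log M)/\eps)$; the last two since $\log w\le\sqrt w$ for all $w\ge 1$, while $w\ge\log M$ makes $\log^2 w$ comfortably smaller than $\sqrt w$ (for $M\ge C_0$).

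For item~2 the arithmetic is tighter but cleaner. Since $\delta_\PRG=\tau/6$ and $2\delta_\mcount=2\tau/3$, the claim reduces to $\delta_\sand+\delta_\SL\le\tau/6$. Expanding $\delta_\SL=2\delta_\sand/p+2\eta^{w'/4}/p$ and plugging in $\delta_\sand=p\tau/48$ (so $2\delta_\sand/p=\tau/24$ and $\delta_\sand<\tau/48$) bounds the left side by $\tau/16+2\eta^{w'/4}/p$, so it suffices that $\eta^{w'/4}\le\tfrac{5}{96}\,p\tau=\tfrac{5\eps p^2}{384\ln n}$; invoking $M\ge n$ once more, it suffices that $\eta^{w'/4}\le\eps p^2/(192\ln M)$. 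Taking logarithms, the right side contributes $\log(192\ln M/\eps)+2\log(1/p)=\log(192\ln M/\eps)+4C\log w\cdot\log(1/\eta)$, while the left contributes $\tfrac{w'}{4}\log(1/\eta)=\big(4C\log w+\log(192\ln M/\eps)\big)\log(1/\eta)$; the $4C\log w\cdot\log(1/\eta)$ terms cancel --- which is exactly why $w'$ was chosen of this form --- leaving $\log(192\ln M/\eps)\cdot\log(1/\eta)\ge\log(192\ln M/\eps)$, i.e.\ $\log(1/\eta)\ge 1$, which holds since $\eta=1/(w\log((\log M)/\eps))\le\tfrac12$.

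The only real difficulty is bookkeeping in item~1: one takes logarithms twice, tracks the nested $\log\log$ terms produced by substituting the value of $p$ into $\delta_\sand$, and must invoke $M\ge n$ so that the final estimate contains no reference to $n$. Item~2, by contrast, is essentially forced --- the constants $\tfrac16,\tfrac13,\tfrac1{48},\tfrac16$ and the exact formula for $w'$ were reverse-engineered from precisely this computation.
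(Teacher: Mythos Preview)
Your proof is correct and follows essentially the same approach as the paper: direct substitution of the parameter values and verification of the resulting numerical inequalities. The only (minor) tactical difference is in item~1: the paper bounds $\log(1/\delta_\sand)$ more tightly as $O(\log^2 w)\cdot\log((\log M)/\eps)$ and absorbs the resulting polylogarithmic factor into a small increase of the exponent (replacing $C\log w$ by $1.01\,C\log w$), whereas you reduce to the cruder sufficient bound $\log(1/\delta_\sand)\le \sqrt{w}\,(\log((\log M)/\eps))^{3/2}$ and verify that directly. Your item~2 argument is the same as the paper's (and your final cancellation of the $4C\log w\cdot\log(1/\eta)$ terms is exactly the computation the paper carries out after the splitting $\eta^{w'/4}<p^2\cdot 2^{-\log(192\ln M/\eps)}$).
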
 

\begin{proof} 
For the first claim, we note that 
\begin{align*}
\log\left(\frac1{\delta_\sand}\right) &= \log\left(\frac{192 \ln n}{\eps p^2}\right)  \\
&= \log((\log n)/\eps) + 2\log(1/p) + O(1) \\
&= O(\log^2 w) (\log((\log M)/\eps)), 
\end{align*} 
and so indeed for $w$ larger than a suitable absolute constant, we have
\begin{align*}
\frac{\eta}{(w \log(1/\delta_\sand))^{C\log w}} &> \frac{\eta}{(w\log((\log M)/\eps))^{1.01C\log w}} \\
&= \left(\frac1{w\log((\log M)/\eps)}\right)^{1.01C\log w + 1} \\
&> \left(\frac1{w \log((\log M)/\eps)}\right)^{2C\log w} \ = \ p.
\end{align*} 
As for the second claim, by our choice of $\delta_\PRG = \tau/6$ and $\delta_\mcount = \tau/3$ the claimed bound is equivalent to 
\[ \delta_\sand + \delta_\SL \le \frac{\tau}{6}. \] 
Since $\delta_\sand < \delta_\SL$, it suffices to ensure that 
\[ \delta_\SL \le \frac{\tau}{12}, \qquad \text{or equivalently,} \qquad \delta_\sand + \eta^{w'/4} \le \frac{p\tau}{24}.\]
Recalling our choice of $\delta_\sand = p\tau / 48$, it remains to check that  
\[ \eta^{w'/4} \le \frac{p\tau}{48} = \frac{\eps p^2}{192\ln n}.\]
Indeed, 
\begin{align*}
\eta^{w'/4} &= \left(\frac1{w\log((\log M)/\eps)}\right)^{4C\log w + \log((192\ln M)/\eps)} \\
&<   \left(\frac1{w\log((\log M)/\eps)}\right)^{4C\log w} \cdot 2^{-\log((192\ln M)/\eps)}\\
&=  \frac{\eps p^2}{192\ln M} \leq \frac{\eps p^2}{192\ln n}  \tag*{(using $M \geq n$).}
\end{align*}
This completes the proof of the second claim.
\end{proof}

We note the following estimates for our choice of parameters when $w = O(\log(M/\eps))$ (recall Theorem~\ref{thm:main2} and in particular that $M \geq n$):
\begin{eqnarray}
\frac1{p} &=& (\log(M/\eps))^{O(\log \log(M/\eps))}  \label{eq:pval} \\
\log(1/\eta) &=& O(\log \log(M/\eps))   \label{eq:eta}\\
w' &=& O(\log((\log M)/\eps))  \label{eq:wprime}\\
\log(1/\delta) &=& O(\log((\log M)/\eps)) + O(\log\log(M/\eps))^2 . \quad \quad \text{(for $\delta \in \{\delta_\mcount, \delta_\PRG,\delta_\sand\}$)} \label{eq:deltas}
\end{eqnarray} 

Proposition~\ref{prop:verify} yields the following special case of Corollary~\ref{cor:single-step}:

\begin{corollary}[Corollary~\ref{cor:single-step} for our choice of parameters]
\label{cor:alg}
There is a deterministic algorithm $A$ with the following guarantee. Given as input an $M$-clause width-$w$ CNF $F$ over $\{0,1\}^n$,
\begin{enumerate}
\item $A$ runs in time 
\begin{align*} &\exp(r_\SL(n,w,w',\eta,\delta_\sand) + r_\PRG(n,w',\delta_\PRG)) \cdot T_\mcount(n,M,w,\delta_\mcount) \\
&=  \left({\frac {M} \eps}\right)^{\tilde{O}(\log((\log M)/\eps))^2}
\end{align*}
\item $A$ outputs a restriction $\tilde{\pi} \in \{0,1,\ast\}^n$ such that 
\begin{enumerate}
\item $|\tilde{\pi}^{-1}(\{0,1\})| \ge pn/2$, 
\item  $\ds\Ex_{\bx\leftarrow\calU}[(F\uhr \tilde{\pi})(\bx)] \ge \Ex_{\bx\leftarrow\calU}[F(\bx)] - \tau.$
\end{enumerate} 
\end{enumerate}
\end{corollary}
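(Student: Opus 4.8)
The plan is to obtain Corollary~\ref{cor:alg} as a direct specialization of Corollary~\ref{cor:single-step} to the parameter values fixed in Section~\ref{sec:glorious}, so there are really only three things to check: that those values form a legal input to Corollary~\ref{cor:single-step}, that the bias‑preservation error of Corollary~\ref{cor:single-step} collapses to $\tau$, and that its running‑time bound simplifies to the stated closed form.

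First I would run the algorithm $A$ of Corollary~\ref{cor:single-step} with $w'$, $\eta$, $p$, $\delta_\mcount$, $\delta_\PRG$, $\delta_\sand$ chosen as in Section~\ref{sec:glorious}. Part~(1) of Proposition~\ref{prop:verify} is exactly the assertion that these $p$, $\eta$, $\delta_\sand$ satisfy assumption~(\ref{eq:p}) of Theorem~\ref{thm:GMR-SL}, which is the only hypothesis Corollary~\ref{cor:single-step} places on its inputs; hence $A$ is well defined. Item~$2(a)$ of Corollary~\ref{cor:alg}, namely $|\tilde\pi^{-1}(\{0,1\})|\ge pn/2$, is then inherited verbatim from item~$2(a)$ of Corollary~\ref{cor:single-step}. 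For item~$2(b)$, Corollary~\ref{cor:single-step} guarantees
\[
\Ex_{\bx\leftarrow\calU}[(F\uhr\tilde\pi)(\bx)] \ge \Ex_{\bx\leftarrow\calU}[F(\bx)] - (\delta_\PRG+\delta_\sand+\delta_\SL+2\delta_\mcount), \qquad \delta_\SL=\frac{2(\delta_\sand+\eta^{w'/4})}{p},
\]
and part~(2) of Proposition~\ref{prop:verify} says precisely that $\delta_\PRG+\delta_\sand+\delta_\SL+2\delta_\mcount\le\tau$; substituting this in yields $2(b)$.

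It remains to simplify the running time, which by Corollary~\ref{cor:single-step}(1) is $\exp(r_\SL+r_\PRG)\cdot T_\mcount$, where $r_\SL$, $r_\PRG$, $T_\mcount$ are the expressions of Lemma~\ref{lem:key-lemma} and Theorem~\ref{thm:GMR-count} evaluated at our parameters. I would substitute the estimates~(\ref{eq:pval})--(\ref{eq:deltas}), together with $w=O(\log(M/\eps))$ from the hypothesis of Theorem~\ref{thm:main2}, into each additive term of $r_\SL$ and $r_\PRG$ and each factor of $T_\mcount$, and bound them one at a time. Two elementary observations make everything collapse: (i) $\log\log(M/\eps)=O(\log((\log M)/\eps))$ — using $\log(a+b)\le 1+\log a+\log b$ for $a,b\ge 1$ — so every factor of the form $w'$, $\log(1/\eta)$, or $\log(1/\delta)$ (for $\delta\in\{\delta_\mcount,\delta_\PRG,\delta_\sand\}$) appearing in these formulas is $O(\log((\log M)/\eps))^2$ or smaller; and (ii) $\log(M/\eps)\ge\log((\log M)/\eps)$ and $\log n\le\log M\le\log(M/\eps)$, so any term carrying a factor of $w$ or of $\log n$ — e.g.\ the $(\log w)(\log n)$ and $w\log(w\log(1/\delta_\sand))$ terms of $r_\SL$, the $n^{O(\log(w/\delta_\mcount))}$ and $(\log n)^{O(w)}$ factors of $T_\mcount$, and the $2^{O(w\log(w/\delta_\mcount)(\log\log(w/\delta_\mcount))^2)}$ factor of $T_\mcount$ — contributes at most $(M/\eps)^{\tilde O(\log((\log M)/\eps))^{O(1)}}$, while every remaining (``doubly‑logarithmic'') term is a polynomial in $\log((\log M)/\eps)$ of degree at most $3$, whose exponential is at most $2^{\tilde O(\log((\log M)/\eps))^{3}}\le(M/\eps)^{\tilde O(\log((\log M)/\eps))^{2}}$ again by (ii). Tracking the largest exponent across all the terms of $r_\SL$, $r_\PRG$, and $\log T_\mcount$ then gives the claimed running time $(M/\eps)^{\tilde O(\log((\log M)/\eps))^2}$.

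I expect this last term‑by‑term bookkeeping to be the only real work; all of the structural content of Corollary~\ref{cor:alg} is supplied by Corollary~\ref{cor:single-step} and Proposition~\ref{prop:verify}, and once observations (i) and (ii) above are in hand the running‑time simplification, though tedious, presents no conceptual obstacle.
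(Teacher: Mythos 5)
Your proposal is correct and follows the paper's intent exactly: the paper gives no explicit proof of Corollary~\ref{cor:alg}, simply stating that Proposition~\ref{prop:verify} ``yields'' it as a special case of Corollary~\ref{cor:single-step}, and your argument spells out precisely the reasoning that is being elided. Part~(1) of Proposition~\ref{prop:verify} licenses the invocation of Corollary~\ref{cor:single-step}, part~(2) collapses the error term to $\tau$, item~$2(a)$ is inherited verbatim, and the remaining content is the term-by-term substitution of the estimates (\ref{eq:pval})--(\ref{eq:deltas}) together with $w=O(\log(M/\eps))$ into $r_\SL$, $r_\PRG$, and $T_\mcount$. Your observations (i) and (ii) are exactly the right elementary facts to make the bookkeeping go through: (i) reduces every doubly-logarithmic quantity to a polynomial in $\log((\log M)/\eps)$, and (ii), namely $\log((\log M)/\eps)\le\log(M/\eps)$ and $\log n\le\log M$, absorbs any extra degree in $\log((\log M)/\eps)$ (which indeed does appear, e.g.\ the $w'\log(w')\log(1/\delta_\PRG)$ term in $r_\PRG$ is cubic) into the $\log(M/\eps)$ in the base of the exponent, giving the claimed $(M/\eps)^{\tilde O(\log((\log M)/\eps))^2}$ bound.
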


\section{Putting the pieces together: the overall search algorithm}
\label{sec:done} 

Using the results of the previous subsections we now prove Theorem \ref{thm:main2}.  The claimed algorithm is given as input a pair $(F,\eps)$; recall that from the theorem statement and as shown in Section \ref{sec:prelim}, we may assume that the CNF $F$ has $M \geq n$ clauses each of width at most $w=O(\log(M/\eps))$.  

The algorithm proceeds for at most $T = (2\ln n)/p$ iterative stages (where $p$ is as defined in Section
\ref{sec:glorious}) as follows.  In the $t$-th stage it operates on a CNF formula 
$F \uhr (\tilde{\pi}^0 \circ \cdots \circ \tilde{\pi}^{t-1})$; the first stage is the $(t=1)$-th stage and we take $\tilde{\pi}^0$ to be the trivial restriction which assigns $\ast$ to each of the $n$ input variables, so $F \uhr \tilde{\pi}^0$ is simply the input CNF $F$.  Before starting the first stage, the algorithm records the values of parameters $w,w',\eta,\delta_\sand,$ and 
$\delta_\PRG$. (Observe that all of these values $w,w',\eta,\delta_\sand,\delta_\PRG$ are defined solely in terms of $M$ and $\eps$, \ignore{ a dependence only on $M$ and $\eps$ and with no dependence on ``$n$'' (thanks to our assumption that $M=n^{\Omega(1)}$), }see Equations 
(\ref{eq:eta}), (\ref{eq:wprime}) and (\ref{eq:deltas}); these values will never change during the execution of the algorithm.)

Stage~1 is carried out as follows: 

\begin{itemize}

\item Let $n_1$ denote
the number of variables that are alive under restriction $\tilde{\pi}^0$, which in stage 1 is $n_1 = n$.  \ignore{(Later, in the general $t$-th stage, these values will be defined using $n_t$, the number of variables that are alive under the restriction $\tilde{\pi}^0 \circ \cdots \circ \tilde{\pi}^{t-1}$.)} The algorithm compute the seed lengths $r_{\SL,1} := r_\SL(n_1,w,w',\eta,\delta_\sand)$ and $r_{\PRG,1} := r_\PRG(n_1,w',\delta_\PRG)$.  

\item Then the algorithm executes the deterministic algorithm $A$ from Corollary \ref{cor:alg} on the $n_1$-variable function 
$F \uhr \tilde{\pi}^0$.  The algorithm produces a restriction $\tilde{\pi}^1 \in \{0,1,\ast\}^{n_1}$ with the properties described in 2(a) and 2(b) of Corollary \ref{cor:alg}.

\end{itemize}

The general $t$-th stage of the algorithm is carried out in a similar way:

\begin{itemize}

\item Let $n_t$ denote the number of variables that are alive under the restriction $\tilde{\pi}^0 \circ \cdots \circ \tilde{\pi}^{t-1} \in \{0,1,\ast\}^n$.  \ignore{Using this value of $n_t$ in place of ``$n$'', it computes new values for 
$w',\eta,\delta_\sand, \delta_\PRG$ using the expressions given in Section \ref{sec:glorious}.  (Observe that the value of
$w$, which does not have any dependence on ``$n$,'' is unchanged.)} The algorithm computes the seed lengths $r_{\SL,t} := r_\SL(n_t,w,w',\eta,\delta_\sand)$ and $r_{\PRG,t} := r_\PRG(n_t,w',\delta_\PRG)$ which are appropriate for the pseudorandom switching lemma and pseudorandom generators for $n_t$-variable functions.

\item Then the algorithm executes the deterministic algorithm $A$ from Corollary \ref{cor:alg} \emph{on the $n_t$-variable CNF 
$F \uhr (\tilde{\pi}^0 \circ \cdots \circ \tilde{\pi}^{t-1})$}.  The algorithm produces a restriction $\tilde{\pi}^t \in \{0,1,\ast\}^{n_t}$ with the properties described in 2(a) and 2(b) of Corollary \ref{cor:alg}.

\end{itemize}

We may view the restriction $\tilde{\pi}^0 \circ \cdots \circ \tilde{\pi}^{t}$ as belonging to $\{0,1,\ast\}^n$.  If $\tilde{\pi}^0 \circ \cdots \circ \tilde{\pi}^{t}$ belongs to $\{0,1\}^n$ (leaves no variables free) then the algorithm halts
and outputs $\tilde{\pi}^0 \circ \cdots \circ \tilde{\pi}^{t}$, otherwise it increments $t$ and proceeds to the next stage.

\medskip

It remains to establish correctness; this is easy given Corollary \ref{cor:alg}.
A crucial aspect of the algorithm is that in the $t$-th stage it works on the $n_t$-variable CNF $F \uhr (\tilde{\pi}^0 \circ \cdots \circ \tilde{\pi}^{t-1})$.  Thanks to part 2(a) of Corollary \ref{cor:alg}, this implies that each value of $n_t$ is at most $n(1-p/2)^{t-1}$, so consequently after at most $T$ stages the algorithm will indeed obtain a restriction $\tilde{\pi}^0 \circ \cdots \circ \tilde{\pi}^{t} \in \{0,1\}^n$ and halt as desired.  For the running time of the algorithm, it follows from part (1) of Corollary \ref{cor:alg} that the running time of each of the (at most) $T = (2\ln n)/p$ stages is upper bounded by $\left(M/\eps\right)^{\tilde{O}(\log((\log M)/\eps))^2}$ and hence this is also an upper bound on the running time of the entire algorithm (recalling the bound on $p$ from (\ref{eq:pval})).  Finally, from the discussion at the start of Section \ref{sec:glorious}, we have that the bias of $F(\tilde{\pi}^0 \circ \cdots \circ \tilde{\pi}^{t})$ is greater than zero, and hence $\tilde{\pi}^0 \circ \cdots \circ \tilde{\pi}^{t}$ is a satisfying assignment as desired.  This concludes the proof of Theorem~\ref{thm:main2}.

\bibliography{allrefs}{}
\bibliographystyle{alpha}

\end{document}